\let\accentvec\vec 
\let\vec\accentvec
\spnewtheorem*{claimproof}{Proof of Claim}{\bfseries}{\normalfont}
\title{%
  \bfseries\Large%
  On the Parameterized Complexity of\\
Default Logic and Autoepistemic Logic%
 \thanks{Work supported by DFG grant VO 630/6-2.}
}
\author{
  Arne Meier\inst{1}\and
  Johannes Schmidt\inst{2}\and
  Michael Thomas\inst{3,}\thanks{Work performed while employed at the Gottfried Wilhelm Leibniz Universit\"{a}t, Appelstr.~4,
	30167~Hannover, Germany}\and
  Heribert Vollmer\inst{1}
}
\institute{
Universit\"at Hannover~ \email{$\{$meier,vollmer$\}$@thi.uni-hannover.de}
\and Université de la Méditerranée~ \email{johannes.schmidt@lif.univ-mrs.fr}
\and TWT GmbH~ \email{michael.thomas@twt-gmbh.de}
}
\newif\ifreport
\begin{document}

\maketitle

\begin{abstract}
We investigate the application of Courcelle's Theorem and the logspace version of Elberfeld \etal{} in the context of the implication problem for propositional sets of formulae, the extension existence problem for default logic, as well as the expansion existence problem for autoepistemic logic and obtain fixed-parameter time and space efficient algorithms for these problems.

On the other hand, we exhibit, for each of the above problems, families of instances of a very simple structure that, for a wide range of different parameterizations, do not have efficient fixed-parameter algorithms (even in the sense of the large class $\XPnu$), unless $\P=\NP$.


\end{abstract}

\section{Introduction} \label{sect:intro}

Non-monotonic reasoning formalisms were introduced in the 1970s as a formal model for human reasoning and have developed into one of the most important topics in computational logic and artificial intelligence. However, as it turns out, most interesting reasoning tasks are computationally intractable already for propositional versions of non-monotonic logics \cite{gottlob92}, in fact presumably much harder than for classical propositional logic. Because of this, a lot of effort has been spent to identify fragments of the logical language for which at least some of the algorithmic problems allow efficient algorithms; a survey of this line of research can be found in \cite{thvo10}.

In this paper a different approach is chosen to deal with hard problems, namely the framework of parameterized complexity. Gottlob et al.{} \cite{gopiwe10} made it clear that the tree width of a (suitable graph theoretic encoding of a) given knowledge base is a useful parameter in this context: making use of Courcelle's Theorem it was shown that many reasoning tasks for logical formalisms such as circumscription, abduction and logic programming become tractable in the case of bounded tree width. Here we examine a family of non-monotonic logics where the semantics of a given set of formulae (axioms, knowledge base) is defined in terms of a fixed-point equation. In particular we turn to default logic \cite{reiter:80} and autoepistemic logic \cite{moore:85}. In the first, human reasoning is mimicked using so called ``default rules'' (in the absence of contrary information, assume this and that); in the second, a modal operator is introduced to model the beliefs of a perfect rational agent. For both logics the algorithmic tasks of satisfiability and reasoning have been shown to be complete in the second level of the polynomial hierarchy \cite{gottlob92}. 

Much in the vein of \cite{gopiwe10} we here examine the parameterized complexity of these problems and, making again use of Courcelle's Theorem and a recent improvement by Elberfeld et al., we obtain time and space efficient algorithms if the tree width of the given knowledge base is bounded. 
This proves once again how important this parameter is. 

A second contribution of our paper concerns lower bounds: Under the assumption $\P\neq\NP$ we show that, even for certain families of very simple knowledge bases and for any parameterization taken from a broad variety, no efficient fixed-parameter algorithms exist, not even in the sense of the quite large parameterized class $\XPnu$. These simple families of knowledge bases are defined in terms of severe syntactic restrictions, e.g., using default rules with literals or propositions only. Restricting the input structure even further we obtain that no fixed-parameter algorithm in the sense of the space-bounded class $\XLnu$ (the logarithmic space analogue of $\XPnu$) exists, unless $\LOGSPACE=\NL$.

Unfortunately, tree width is not among the parameters for which our lower bound can be proven---otherwise we would have proven $\P\neq\NP$. In a third part of our paper, we show that those structurally very simple families of knowledge bases, for which we gave our lower bounds, already have unbounded tree width. For this result, we introduce the notion of pseudo-cliques and show how to embed these into our graph-theoretic encodings of knowledge bases.

\ifreport
\else
Due to space reasons the proof of Theorem~\ref{thm:tw-pseudo-cliques-cardinality} has to be omitted and can be read in the CoRR version "Meier, A., Schmidt, J., Thomas, M., Vollmer, H.: \emph{On the parameterized complexity
of default logic and autoepistemic logic.} CoRR abs/1110.0623 (2011)".
\fi

\section{Preliminaries} \label{sect:prelim}

\paragraph{Complexity Theory.}

In this paper we will make use of several standard notions of complexity theory such as the complexity classes $\LOGSPACE$, $\ParityL$, $\NL$, $\P$, $\NP$, $\coNP$, and $\sigPtwo$
and their completeness notions under logspace-many-one $\leqlogm$ reductions.


Given a problem $P$ and a parameterization $\kappa$, $(P,\kappa)$ belongs to the class $\FPT$ iff there is a deterministic algorithm solving $P$ in time $f(\kappa(x))\cdot|x|^{O(1)}$; $(P,\kappa)$ is said to be \emph{fixed parameter tractable} then. If $(P,\kappa)$ is a parameterized problem, then $(P,\kappa)_\ell\eqdef\{x\in P\mid \kappa(x)=\ell\}$ is the $\ell$-th slice of $(P,\kappa)$. Define $(P,\kappa)$ to be a member of $\XPnu$ (in words, $\XP$ nonuniform) iff $(P,\kappa)_\ell\in\P$ for all $\ell\in\N$. For background on parameterized complexity we recommend \cite{flgr06}.

Furthermore, we require space parameterized complexity classes which have been defined in \cite{Stockhusen11} recently. Given a parameterized problem $(P,\kappa)$, we say
\begin{itemize}
	\item $(P,\kappa)\in\PLS$ iff there exists a deterministic algorithm deciding $P$ in space $O(\log{\kappa(x)}+\log|x|)$,
	\item $(P,\kappa)\in\FPL$ iff there exists a deterministic algorithm deciding $P$ in space $O(\log{f(\kappa(x))}+\log|x|)$ for some recursive funktion $f$, and
	\item $(P,\kappa)\in\XLnu$ iff $(P,\kappa)_\ell\in\LOGSPACE$ for all $\ell\in\N$.
\end{itemize}
It holds that $\PLS\subseteq\FPL\subseteq\FPT\subseteq\XPnu$ as well as $\FPL\subseteq\XLnu\subseteq\XPnu$.

\paragraph{Tree width.}

A \emph{tree decomposition} of a graph $G = (V,E)$ is a pair $(T,X)$, where $X = \{B_1,\dots,B_r\}$ is a family of subsets of $V$
(the set of \emph{bags}), and $T$ is a tree whose nodes are the bags $B_i$, satisfying the following conditions:
\emph{(i)} $\bigcup X = V$, i.e., every node appears in at least one bag, \emph{(ii)} $\forall (u,v) \in E$ $\exists B\in X$: $u,v \in B$, i.e., every edge is 'contained' in a bag, and \emph{(iii)} $\forall u\in V$: $\{B\; |\; u \in B\}$ is connected in $T$, i.e., for every node $u$ the set of bags containing $u$ is connected in $T$.

The \emph{width} of a decomposition $(T,X)$, $\width(T,X)$, is the number $\max\{\,|B| \;\big|\; B \in X\} - 1$, \ie, the size of the largest bag minus 1.
The \emph{tree width} of a graph $G$, $\tw(G)$, is the minimum of the widths of the tree decompositions of $G$.

%
%

\paragraph{Default Logic.} 

Following Reiter \cite{reiter:80}, a \emph{default rule} is a triple $\frac{\alpha:\beta}{\gamma}$; $\alpha$ is called the \textit{prerequisite}, $\beta$ is called the \textit{justification}, and $\gamma$ is called the \textit{conclusion}. If $B$ is a set of Boolean functions, then $d=\frac{\alpha:\beta}{\gamma}$ is a $B$-default rule if $\alpha,\beta,\gamma$ are $B$-formulae, i.e., formulae that use only connectors for functions in $B$.
A $B$-\textit{default theory} $(W,D)$ consists of a set of propositional $B$-formulae $W$ and a set of $B$-default rules $D$.

Let $(W,D)$ be a default theory and $E$ be a set of formulae. Now define $\Gamma(E)$ as the smallest set of formulae such that \emph{(i)} $W\subseteq\Gamma(E)$, \emph{(ii)} $\Gamma(E)$ is closed under deduction, \ie, $\Gamma(E)=\Th{\Gamma(E)}$, and \emph{(iii)} for all defaults $\frac{\alpha:\beta}{\gamma}\in D$ with $\alpha\in\Gamma(E)$ and $\lnot\beta\notin E$, it holds that $\gamma\in\Gamma(E)$. Then a \emph{stable extension} of $(W,D)$ is a fix-point of $\Gamma$, \ie, a set $E$ such that $E=\Gamma(E)$.

A definition for stable extensions beyond fix-point semantics which was introduced by Reiter \cite{reiter:80} uses the principle of a stage construction: for a given default theory $(W,D)$ and a set $E$ of formulae, define $E_0=W$ and $E_{i+1} = \Th{E_i}\cup\{\gamma\mid \frac{\alpha:\beta}{\gamma}\in D,\alpha\in E_i\text{ and } \lnot\beta\notin E\}$. Then $E$ is a \emph{stable extension} of $(W,D)$ if and only if $E=\bigcup_{i\in\N}E_i$, and the set $G=\{\frac{\alpha:\beta}{\gamma}\in D\mid \alpha\in E\land\lnot\beta\notin E\}$ is called the set of \emph{generating defaults}.

The so to speak satisfiability problem for default logic, here called \emph{extension existence problem}, $\EXT$, is the problem, given a theory $(W,D)$, to decide if it has a stable extension. Gottlob \cite{gottlob92} proved that this problem is complete for the class $\sigPtwo$.

\paragraph{Autoepistemic Logic.} 

Moore in 1985 introduced a new modal operator $L$ stating that its argument is "believed" as an extension of propositional logic \cite{moore:85}. Further the expression $L\varphi$ is treated as an atomic formula with respect to the consequence relation $\models$. Given a set of Boolean functions $B$, we define with $\LAE(B)$ the set of all autoepistemic $B$-formulae through $\varphi::= p\mid f(\varphi,\dots,\varphi)\mid L\varphi$ for $f$ being a Boolean functions in $B$ and a proposition $p$. If $\Sigma\subseteq\LAE(B)$, then a set $\Delta\subseteq\LAE(B)$ is a \emph{stable expansion} of $\Sigma$ if it satisfies the condition $\Delta=\Th{\Sigma\cup L(\Delta)\cup\lnot L(\overline\Delta)}$, where $L(\Delta)\eqdef\{L\varphi\mid\varphi\in\Delta\}$ and $\lnot L(\overline\Delta)\eqdef\{\lnot L\varphi\mid\varphi\notin\Delta\}$, and $L(\Delta),\lnot L(\overline\Delta)\subseteq\LAE(B)$. 

Let $\SF\varphi$ ($\SFL\varphi$) define the set of ($L$ preceded) subformulae for an autoepistemic formula $\varphi$, and let us use the shorthand $\lnot S=\{\lnot\varphi\mid\varphi\in S\}$ for a set of (autoepistemic) formulae $S$. Given a set of autoepistemic $B$-formulae $\Sigma\subseteq\LAE(B)$, we say a set $\Lambda\subseteq\SFL\Sigma\cup\lnot\SFL\Sigma$ is \emph{$\Sigma$-full} if for each $L\varphi\in\SFL\Sigma$ it holds $\Sigma\cup\Lambda\models\varphi$ iff $L\varphi\in\Lambda$, and $\Sigma\cup\Lambda\not\models\varphi$ iff $\lnot L\varphi\in\Lambda$. 

The connection of $\Sigma$-full sets and stable expansions of $\Sigma$ has been observed by Niemelä \cite{niemela:90}: if $\Sigma\subseteq\LAE$ is a set of autoepistemic formulae and $\Lambda$ is a $\Sigma$-full set, then for every $L\varphi\in\SFL\Sigma$ either $L\varphi\in\Lambda$ or $\lnot L\varphi\in\Lambda$. The stable expansions of $\Sigma$ and $\Sigma$-full sets are in one-to-one correspondence.

The \emph{expansion existence problem}, $\EXP$, is the problem, given a set of autoepistemic formulae $\Sigma$, to decide if it has a stable expansion. Again, Gottlob proved that this problem is complete for the class $\sigPtwo$.

\section{MSO-Encodings} \label{sect:mso-enc}

It will be the aim of this paper to apply Courcelle's theorem 
to obtain fixed-parameter algorithms in the context of default and autoepistemic logic. For this, we will have to describe the relevant decision problems by monadic second-order formulae. In this section we will explain how to do this and obtain a preliminary result for the implication problem. Our approach is similar to the one of Gottlob, Pichler, and Wei \cite{gopiwe10} where MSO encodings for algorithmic problems from logic programming, abduction, and circumscription where developed.

Now fix a finite set $B$ of Boolean functions. 
Denote by $\tau_B$ the vocabulary $\{\const^1_f \mid f \in B, \arity(f) = 0\} \cup \{\conn_{f,i}^2 \mid f \in B, 1 \leq i \leq \arity(f)\}$.
With respect to a set $\Gamma$ of propositional $B$-formulae we associate a $\tau_{B,\textit{prop}}$-structure $\struc{A}_\Gamma$ where $\tau_{B,\textit{prop}}\eqdef \tau_B\cup\{\var^1,\repr^1\}$ such that the universe of 
$\struc{A}_\Gamma$ is the set of subformulae of the formulae in $\Gamma$, and \emph{(i)} $\var(x)$ holds iff $x$ represents a variable, \emph{(ii)} $\repr(x)$ holds iff $x$ represents a formula from $\Gamma$, \emph{(iii)} $\const_f^1(x)$ holds iff $x$ represents the constant $f$, and \emph{(iv)} $\conn^2_{f,i}(x,y)$ holds iff $x$ represents the $i$th argument of the function $f$ at the root of the formula tree represented by $y$.

\begin{lemma} \label{lem:mso-sat}
  Let $B$ be a finite set of Boolean functions.
  Then there exists an MSO-formula $\theta_{\textit{sat}}$ over $\tau_{B,\textit{prop}}$ such that for any $\Gamma \subseteq \allFormulae(B)$ it holds that
  $\Gamma \text{ is satisfiable}$ $\text{iff } \struc{A}_\Gamma \models \theta_{\textit{sat}}.$
\end{lemma}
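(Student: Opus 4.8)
The plan is to express satisfiability of $\Gamma$ by a single existential monadic second-order quantifier that \emph{guesses the set of subformulae made true by some assignment}. Concretely, I would let a set variable $M$ range over the universe of $\struc{A}_\Gamma$, with the intended reading that $x \in M$ means ``the subformula $x$ evaluates to true''. Since the universe identifies syntactically equal subformulae---so that in particular every propositional variable is a single element---the restriction of $M$ to the variable nodes is exactly a truth assignment, and no extra bookkeeping is needed to keep repeated occurrences consistent. The sentence then has the shape
\[
  \theta_{\textit{sat}} \eqdef \exists M \bigl( \varphi_{\mathrm{cons}}(M) \wedge \forall x\,(\repr(x) \to x \in M) \bigr),
\]
where $\varphi_{\mathrm{cons}}(M)$ forces $M$ to respect the Boolean semantics at every node, and the second conjunct demands that every formula of $\Gamma$ (every $x$ with $\repr(x)$) be true.

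Next I would spell out $\varphi_{\mathrm{cons}}(M)$ as a \emph{finite} conjunction, exploiting that $B$ is fixed. For every nullary $f \in B$ I add the clause $\forall x(\const_f(x) \to (x \in M))$ if $f$ is the constant $1$, and $\forall x(\const_f(x) \to \neg(x\in M))$ if $f$ is the constant $0$. For every $f \in B$ of arity $k \geq 1$ I add
\[
  \forall x\,\forall y_1\cdots\forall y_k\,\Bigl( \bigwedge_{i=1}^{k}\conn_{f,i}(y_i,x) \to \bigl(x \in M \leftrightarrow \psi_f(y_1,\dots,y_k)\bigr) \Bigr),
\]
where $\psi_f$ is the fixed propositional description of the truth table of $f$: the disjunction, over all $\vec a \in \{0,1\}^k$ with $f(\vec a)=1$, of $\bigwedge_{i=1}^k \ell_i$, with $\ell_i = (y_i \in M)$ when $a_i = 1$ and $\ell_i = \neg(y_i \in M)$ when $a_i = 0$. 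Variable nodes receive no constraint, so their membership in $M$ is left free. Because $B$ is finite and its arities are bounded, $\theta_{\textit{sat}}$ is a single fixed MSO-sentence whose size depends only on $B$, which is exactly what the later fixed-parameter arguments require.

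For correctness I would argue both directions. If an assignment $\sigma$ satisfies $\Gamma$, set $M$ to be the set of subformulae evaluating to true under $\sigma$; a one-line check per connective shows that $\varphi_{\mathrm{cons}}(M)$ holds, and since $\sigma$ satisfies each formula of $\Gamma$, every $\repr$-element lies in $M$. Conversely, given an $M$ witnessing $\theta_{\textit{sat}}$, read off the assignment $\sigma$ from the variable nodes contained in $M$; the essential step is an induction on the structure of the formula tree showing that $\varphi_{\mathrm{cons}}(M)$ forces $x \in M$ iff $x$ evaluates to true under $\sigma$, whence the $\repr$-clause guarantees that all of $\Gamma$ is satisfied.

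The main obstacle is the backward induction: it relies on $\struc{A}_\Gamma$ genuinely encoding well-formed formulae, so that the $\conn_{f,i}$-relation assigns each non-leaf node a unique root connective with the correct arguments and is acyclic, making the induction well-founded. Since the lemma only asserts the equivalence on structures of the form $\struc{A}_\Gamma$, I can take this well-formedness for granted rather than axiomatising it inside $\theta_{\textit{sat}}$; the remaining per-connective verifications that $\psi_f$ faithfully encodes $f$ are routine and I would not carry them out in detail.
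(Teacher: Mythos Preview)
Your proposal is correct and follows essentially the same route as the paper: both guess a set $M$ of ``true'' subformulae via an existential set quantifier, constrain $M$ to respect the semantics of each $f\in B$ at every node (the paper's $\theta_{\textit{assign}}(M)$ is your $\varphi_{\mathrm{cons}}(M)$, written with the shorthand $f(\llbracket y_1\in M\rrbracket,\dots)$ in place of your explicit DNF of the truth table), and require every $\repr$-element to lie in $M$. The only visible difference is that the paper additionally conjoins a well-formedness sentence $\theta_{\textit{struc}}$, whereas you deliberately omit it and argue that on structures of the form $\struc{A}_\Gamma$ it is automatically satisfied; this is a sound observation and does not affect the correctness of the lemma as stated.
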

\begin{proof}
	We define the MSO-formulae $\theta_{\textit{struc}}$, $\theta_{\textit{assign}}$, and $\theta_{\exists\textit{assign}}$ over the vocabulary $\tau_{B,\textit{prop}}$ as follows:
  \begin{multline*}
    \theta_{\textit{struc}} \eqdef \,
           \forall x \Big(\lnot\repr(x)\to \exists y \big(\lnot \var(y)\land\bigvee_{\mathclap{\substack{f \in B,\\ 1 \leq i \leq \arity(f)}}} \conn_{f,i}(x,y)\big)\Big)\land
          \forall x \Big(\lnot \var(x) \imp \\
            \bigvee_{\mathclap{\substack{f \in B,\\ \arity(f)=0}}} \const_f(x) \xor
            \bigvee_{f \in B}\hspace{-1em}\bigwedge_{\substack{\phantom{\bigvee}\\1 \leq i \leq \arity(f)}}\hspace{-1.7em}\exists y \big(\conn_{f,i}(y,x)\land\forall z(\conn_{f,i}(z,x)\to z=y)\big)\Big).
  \end{multline*}
  
  The formula $\theta_{\textit{struc}}$ states that if an individual is not representing a formula $\varphi\in\Gamma$, then there must be at least one subformula in which it occurs.
 If an individual is not a variable, then it represents either a constant or a Boolean function $f\in B$ and needs to have corresponding $\arity f$ individuals.

  Let $n$ denote the maximal arity of $B$, i.e., $n\eqdef \max\{\arity(f)\mid f\in B\}$.
  \begin{multline*}
    \theta_{\textit{assign}}(M) \eqdef \, 
           \forall x \forall y_1 \cdots \forall y_n \bigwedge_{f \in B} 
            \Big(
              ~~~\bigwedge_{\mathclap{\substack{f\in B,\\\arity(f)=0}}}\const_f(x)\to (M(x) \leftrightarrow f)\land\\
              \bigwedge_{\mathclap{1 \leq i \leq \arity(f)}}~~~ \conn_{f,i}(y_i,x) \imp 
              \big(
                M(x) \leftrightarrow f(\llbracket y_1 \in M\rrbracket,\ldots,\llbracket y_{\arity(f)} \in M\rrbracket)
              \big)
            \Big),
	\end{multline*}
	where $\llbracket x\in M\rrbracket$ is $\true$ iff $x\in M$ holds and $\false$ otherwise. Now define
	\begin{align*}            
    \theta_{\exists\textit{assign}} &\eqdef \, 
          \exists M \big( \theta_{\textit{assign}}(M) \land \forall x\Big(\repr(x) \imp M(x) \Big) 
  \end{align*}
  It is easy to verify that $\theta_{\textit{sat}} \eqdef  \theta_{\textit{struc}} \land \theta_{\exists\textit{assign}}$ satisfies the lemma.\qed
\end{proof}

\noindent Let $B$ be a finite set of Boolean functions and $F,G$ be sets of $B$-formulae. Answering the implication problem of sets of propositional formulae, i.e., the question whether $F\models G$, requires to extend our vocabulary $\tau_{B,\textit{prop}}$ to $\tau_{B,\textit{imp}}\eqdef \tau_{B,prop}\cup\{\reprPrem^1, \reprConc^1\}$ as well as our structure which we will denote 
by $\struc{A}_{F,G}$: $\reprPrem(x)$ is true iff $x$ represents a formula from $F$, and $\reprConc(x)$ is true iff $x$ represents a formula from $G$. Now it is straightforward to formalize implication.

\begin{lemma}\label{lem:mso-imp}
  Let $B$ be a finite set of Boolean functions.
  Then there exists an MSO-formula $\theta_{\textit{imp}}$ over $\tau_{B,\textit{imp}}$ such that for any $\Gamma \subseteq \allFormulae(B)$ and any
  $F,G\subseteq\Gamma$ it holds that 
  $F \models G \text{ iff } \struc{A}_{F,G} \models \theta_{\textit{imp}}.$
\end{lemma}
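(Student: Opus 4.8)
The plan is to reuse the building blocks $\theta_{\textit{assign}}$ and $\theta_{\textit{struc}}$ from the proof of Lemma~\ref{lem:mso-sat}. The crucial observation there is that, for a fixed structure $\struc{A}_{F,G}$, a set $M$ satisfies $\theta_{\textit{assign}}(M)$ exactly when the membership ``$x\in M$'' agrees, on every node $x$, with the truth value of the subformula represented by $x$ under the assignment obtained by reading off $M$ on the variable nodes. In other words, the sets $M$ with $\struc{A}_{F,G}\models\theta_{\textit{assign}}(M)$ are in bijection with truth assignments to the variables occurring in $\Gamma$, the correspondence being bottom-up evaluation along the $\conn_{f,i}$-edges. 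This is precisely the semantic content that made $\theta_{\exists\textit{assign}}$ work in the satisfiability lemma, and I would isolate it as the single fact driving the present argument.

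Given this, I would express $F\models G$ as the statement that \emph{every} assignment making all of $F$ true also makes all of $G$ true. Concretely, I would set
\[
  \theta_{\textit{imp}} \eqdef \theta_{\textit{struc}} \land \forall M \Big( \big( \theta_{\textit{assign}}(M) \land \forall x\,(\reprPrem(x) \imp M(x)) \big) \imp \forall x\,(\reprConc(x) \imp M(x)) \Big).
\]
Here the universally quantified $M$ ranges over exactly the consistent truth assignments; the hypothesis $\forall x\,(\reprPrem(x)\imp M(x))$ asserts that $M$ satisfies all formulae of $F$, and the conclusion $\forall x\,(\reprConc(x)\imp M(x))$ asserts that $M$ satisfies all formulae of $G$. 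The conjunct $\theta_{\textit{struc}}$ merely records that the structure is a genuine encoding and is automatically true on every $\struc{A}_{F,G}$.

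For correctness I would verify both implications using the bijection. For the forward direction, assume $F\models G$ and take any $M$ with $\theta_{\textit{assign}}(M)$ containing all premise-nodes; by the bijection $M$ corresponds to an assignment $\sigma$ satisfying $F$, hence $\sigma$ satisfies $G$, so every conclusion-node lies in $M$, and $\theta_{\textit{imp}}$ holds. For the converse, assume $\struc{A}_{F,G}\models\theta_{\textit{imp}}$ and let $\sigma$ be any assignment satisfying $F$; taking $M$ to be the (unique) set of subformulae true under $\sigma$, we have $\theta_{\textit{assign}}(M)$ and all premise-nodes in $M$, so $\theta_{\textit{imp}}$ forces all conclusion-nodes into $M$, i.e.\ $\sigma$ satisfies $G$. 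Note that the vacuous cases are handled automatically: if $F$ is unsatisfiable no $M$ triggers the hypothesis, and if $G=\emptyset$ the conclusion is vacuously true.

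The only genuine obstacle is making the bijection between $\theta_{\textit{assign}}$-satisfying sets and variable assignments precise, in particular the existence and uniqueness of $M$ for a given $\sigma$; both follow by a straightforward induction on the structure of the formula trees and are exactly what already underlies Lemma~\ref{lem:mso-sat}. Everything else is a routine unfolding of the semantics of the first-order quantifiers, so I would keep that part brief.
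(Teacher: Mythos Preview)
Your proposal is correct and essentially identical to the paper's proof: the paper defines auxiliary formulae $\theta_{\textit{premise}}(M)\eqdef\forall x(\reprPrem(x)\imp M(x))$ and $\theta_{\textit{conclusion}}(M)\eqdef\forall x(\reprConc(x)\imp M(x))$ and sets $\theta_{\textit{imp}}\eqdef\theta_{\textit{struc}}\land\forall M\big((\theta_{\textit{assign}}(M)\land\theta_{\textit{premise}}(M))\imp\theta_{\textit{conclusion}}(M)\big)$, which is exactly your formula with the two subformulae named rather than inlined. Your correctness justification is in fact more explicit than what the paper provides.
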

\begin{proof}
  Define the MSO-formulae $\theta_{\textit{premise}}(M)$, $\theta_{\textit{conclusion}}(M)$, and $\theta_{\textit{implies}}$ as follows:
  \begin{align*}
    \theta_{\textit{premise}}(M) \eqdef  \, & \forall x (\reprPrem(x)\imp M(x)) \\
    \theta_{\textit{conclusion}}(M) \eqdef  \, & \forall x (\reprConc(x) \imp M(x))) \\
    \theta_{\textit{implies}} \eqdef  \, & \forall M \Big(\big(\theta_{\textit{assign}}(M) \land \theta_{\textit{premise}}(M)\big) \imp \theta_{\textit{conclusion}}(M)\Big)
  \end{align*}
  Then, we can define the formula $\theta_{\textit{imp}}$ as $\theta_{\textit{imp}}\eqdef \theta_{\textit{struc}} \land \theta_{\textit{implies}}$, where $\theta_{\textit{struc}}$ and $\theta_{\textit{assign}}$ are defined as above in Lemma~\ref{lem:mso-sat}.\qed
\end{proof}

The application of Courcelle's Theorem~\cite{courcelle90} and the logspace version of Elberfeld \etal\,\cite{ebjata10} directly leads to the following theorem.

\begin{theorem} \label{thm:imp-FPT-for-bounded-tw}
  Let $B$ be a finite set of Boolean functions, 
  let $k \in \N$ be fixed, and let $F,G$ be sets of $B$-formulae such that $\struc{A}_{F,G}$ has tree width bounded by $k$. 
  Then the implication problem for sets of $B$-formulae is solvable in time $O(f(k)\cdot(|F|+|G|))$ and space $O(\log(|F|+|G|))$.
\end{theorem}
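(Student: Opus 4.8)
The plan is to reduce the implication problem directly to MSO model checking on a structure of bounded tree width and then to invoke the two meta-theorems cited immediately before the statement. By Lemma~\ref{lem:mso-imp} there is a single fixed MSO-sentence $\theta_{\textit{imp}}$ over $\tau_{B,\textit{imp}}$ such that $F\models G$ holds if and only if $\struc{A}_{F,G}\models\theta_{\textit{imp}}$. The key point is that $\theta_{\textit{imp}}$ depends only on the fixed set $B$ and not on $F$ or $G$; hence for fixed $B$ it is a constant-size input to the model-checking procedure, and the parameter $k$ enters the complexity only through the bound on $\tw(\struc{A}_{F,G})$.

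First I would argue that $\struc{A}_{F,G}$ is cheap to build from $F$ and $G$. Its universe is the set of subformulae occurring in $F\cup G$, which has cardinality linear in $|F|+|G|$ since each subformula corresponds to a node of a formula tree, and each of the relations $\var$, $\repr$, $\reprPrem$, $\reprConc$, $\const_f$, $\conn_{f,i}$ is decided by a purely local syntactic inspection of these trees. Consequently $\struc{A}_{F,G}$ has size $O(|F|+|G|)$ and can be produced in logarithmic space and linear time, so the size parameter handed to the subsequent step is $\Theta(|F|+|G|)$.

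With the hypothesis $\tw(\struc{A}_{F,G})\le k$ in hand, I would then apply the two results separately. Courcelle's Theorem yields, for the fixed sentence $\theta_{\textit{imp}}$ and structures of tree width at most $k$, a model-checking algorithm running in time $O(f(k)\cdot|\struc{A}_{F,G}|)=O(f(k)\cdot(|F|+|G|))$ for a suitable computable $f$, which gives the claimed time bound. The logspace version of Elberfeld \etal{} gives, for the same fixed sentence and the same tree-width bound, a model-checking algorithm running in space $O(\log|\struc{A}_{F,G}|)=O(\log(|F|+|G|))$, which gives the claimed space bound. Composing either algorithm with the construction of $\struc{A}_{F,G}$ preserves the respective bound—for the space statement one uses closure of logarithmic space under composition, recomputing bits of $\struc{A}_{F,G}$ on demand rather than storing it.

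The main obstacle is the ingredient hidden inside both meta-theorems: each presupposes access to a width-$k$ tree decomposition of the input structure, so one must first compute such a decomposition within the claimed resources. For the time bound this is supplied by Bodlaender's linear-time algorithm for fixed tree width; for the space bound it is precisely the contribution of Elberfeld \etal, whose procedure both constructs a tree decomposition and evaluates the MSO sentence in logarithmic space. Once these decomposition-computation steps are cited, the remainder is the routine bookkeeping sketched above.
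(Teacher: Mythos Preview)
Your proposal is correct and follows exactly the approach the paper takes: the paper's own justification is simply the sentence ``The application of Courcelle's Theorem and the logspace version of Elberfeld \etal{} directly leads to the following theorem,'' relying on the MSO encoding from Lemma~\ref{lem:mso-imp}. You have merely spelled out the details (linear-size structure, fixed sentence depending only on $B$, composition in logspace, Bodlaender's algorithm for the decomposition) that the paper leaves implicit.
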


In other words, the implication problem of sets of formulae parameterized by the tree width of $\struc{A}_{F,G}$ is fixed-parameter tractable, and even in $\PLS$. In the following sections we will extend this result to default logic and autoepistemic logic.

\section{Default Logic}

Let $B$ be a finite set of Boolean functions. 
Write $W \Cup D$ as a shorthand for the set of formulae $W \cup \{\alpha,\beta,\gamma \mid \frac{\alpha:\beta}\gamma \in D\}$.
To any $B$-default theory $(W,D)$, we associate a 
$\tau_{B,\textit{dl}} \eqdef  \tau_{B,\textit{prop}} \cup \{\kb^1,\default^1,\prem^2,\just^2,\concl^2\}$-structure $\struc{A}_{(W,D)}$ 
such that the universe of $\struc{A}_{(W,D)}$ is the union of the set of subformulae of 
$W \Cup D\cup\{\lnot\beta\mid\frac{\alpha:\beta}{\gamma}\in D\}$ together with a set corresponding to the defaults in $D$, the relations from $\tau_{B,\textit{prop}}$ are interpreted as in Section~\ref{sect:mso-enc}, and  
\begin{itemize}
	\item $\kb(x)$ holds iff $x$ represents a formula from the knowledge base $W$,
  \item $\default(x)$ holds iff $x$ represents a default $d \in D$,
  \item $\prem(x,y)$ (resp.\ $\just(x,y)$, $\concl(x,y)$) holds iff 
  $x$ represents the premise $\alpha$ (resp.\ justification $\beta$, conclusion $\gamma$) and $y$ represents the default rule $\frac{\alpha:\beta}\gamma$.
\end{itemize}

\begin{lemma} \label{lem:mso-ext}
  Let $B$ be a finite set of Boolean functions
  and let $(W,D)$ be a $B$-default theory.
  There exists an MSO-formula $\theta_{\textit{extension}}$ such that $(W,D)$ possesses a stable extension 
  iff $\struc{A}_{(W,D)} \models \theta_{\textit{extension}}$.
\end{lemma}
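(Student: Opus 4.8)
The plan is to exploit the classical characterization of stable extensions through their \emph{generating defaults}: every stable extension $E$ of $(W,D)$ has the form $E = \Th{W \cup \{\gamma \mid \frac{\alpha:\beta}{\gamma} \in G\}}$ for a suitable set $G \subseteq D$, and is thus completely determined by the finite set $G$. Since the defaults of $D$ occur as elements of the universe of $\struc{A}_{(W,D)}$ (those $x$ with $\default(x)$), I can replace the quantification over the infinite object $E$ by a single existential \emph{set} quantifier $\exists G$ ranging over subsets of the default-elements. The formula will then have the shape $\theta_{\textit{extension}} \eqdef \theta_{\textit{struc}}' \land \exists G\big(\theta_{\textit{grounded}}(G) \land \theta_{\textit{applic}}(G) \land \theta_{\textit{closed}}(G)\big)$, where $\theta_{\textit{struc}}'$ adapts the well-formedness formula $\theta_{\textit{struc}}$ of Lemma~\ref{lem:mso-sat} to $\tau_{B,\textit{dl}}$, and the three remaining conjuncts express exactly that $G$ is the set of generating defaults of a stable extension.

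The backbone of all three conjuncts is the ability, already available from Lemma~\ref{lem:mso-sat}, to speak about truth assignments via $\theta_{\textit{assign}}(M)$: for a set $P$ of formula-elements picked out by an MSO-definable condition and a single formula-element $q$, the entailment $P \models q$ is expressed by $\forall M\big(\theta_{\textit{assign}}(M) \land (P \subseteq M) \to M(q)\big)$, and its negation expresses $P \not\models q$. I apply this with $P$ taken to be $W$ together with the conclusions $\concl(G)$ of the defaults in $G$ (writing $\concl(G)$ for $\{\gamma \mid \frac{\alpha:\beta}{\gamma}\in G\}$); concretely $P \subseteq M$ abbreviates $\forall z(\kb(z)\to M(z)) \land \forall d\forall c((G(d) \land \concl(c,d)) \to M(c))$. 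The applicability conjunct $\theta_{\textit{applic}}(G)$ states that every $d \in G$ has a justification consistent with $E$. Rather than referring to the element $\lnot\beta$ (which need not be constructible when $B$ lacks negation), I phrase this semantically as satisfiability of $W \cup \concl(G) \cup \{\beta\}$, i.e.\ $\exists M\big(\theta_{\textit{assign}}(M) \land (P\subseteq M) \land M(\beta_d)\big)$, where $\beta_d$ is the element with $\just(\beta_d,d)$; this correctly captures $\lnot\beta_d \notin E$ even when $E$ is inconsistent. The closure conjunct $\theta_{\textit{closed}}(G)$ states that every $d$ with $\default(d)\land\lnot G(d)$ is correctly excluded, i.e.\ its prerequisite is not entailed by $E$ or its justification is inconsistent with $E$; both disjuncts are instances of the (non\nobreakdash-)entailment and (un\nobreakdash-)satisfiability schemes above.

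The main obstacle is \emph{groundedness}, which the two selection conditions alone do not guarantee: for instance $W=\emptyset$, $D=\{\frac{p:\top}{p}\}$ satisfies the naive fixed-point condition with $G=\{d\}$, yet has no extension containing $p$. Groundedness is usually stated via an enumeration $d_1,\dots,d_m$ of $G$ along which each $\alpha_{d_i}$ is derivable from $W$ and the strictly earlier conclusions, and such an ordering is not directly available in \emph{monadic} second-order logic. I circumvent this by the well-foundedness reformulation $\theta_{\textit{grounded}}(G) \eqdef \forall S\Big((S \subseteq G \land \exists x\,S(x)) \to \exists d\big(S(d) \land (W \cup \concl(G\setminus S) \models \alpha_d)\big)\Big)$, which uses only a universal set quantifier over $S$ and the entailment scheme with $P = W \cup \concl(G\setminus S)$ and $q=\alpha_d$ the prerequisite of $d$. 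I then prove equivalence to the enumeration formulation: ``$\Rightarrow$'' takes for a given $S$ the element of $S$ of smallest index, whose strictly earlier conclusions all lie in $\concl(G\setminus S)$; ``$\Leftarrow$'' builds the enumeration greedily, repeatedly taking for $S$ the set of not-yet-enumerated defaults and appending a witness $d$.

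It remains to assemble correctness. The reformulation also shows $\alpha_d \in E$ for every $d\in G$, so that $\theta_{\textit{grounded}} \land \theta_{\textit{applic}} \land \theta_{\textit{closed}}$ together say precisely that $G$ is grounded and equals $\{d \in D \mid \alpha_d \in E,\ \lnot\beta_d \notin E\}$; by the generating-defaults characterization this holds iff $E = \Th{W \cup \concl(G)}$ is a stable extension. Hence $\struc{A}_{(W,D)} \models \theta_{\textit{extension}}$ iff some grounded $G$ generates a stable extension iff $(W,D)$ possesses a stable extension. The remaining verifications—well-formedness of $\struc{A}_{(W,D)}$ and the routine semantics of the entailment macros built from $\theta_{\textit{assign}}$—are exactly as in Lemmas~\ref{lem:mso-sat} and~\ref{lem:mso-imp}. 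I expect the groundedness reformulation and the proof of its equivalence to the enumeration version to be the only genuinely delicate step.
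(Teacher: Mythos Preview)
Your proposal is correct and follows the same overall architecture as the paper—existentially quantify a set $G$ of default-elements and express, via the $\theta_{\textit{assign}}$ machinery of Lemma~\ref{lem:mso-sat}, that $G$ is exactly the set of generating defaults of some stable extension—but the two proofs diverge on the two non-routine points.

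First, for groundedness the paper does \emph{not} use your well-foundedness reformulation. It simply defines $\theta_{\textit{stable}}(G)$ to say $G(d)\leftrightarrow\theta_{\textit{app}}(d,G)$ for every default $d$, and then sets $\theta_{\textit{gd}}(G)\eqdef\theta_{\textit{stable}}(G)\land\forall G'\,(G'\subsetneq G\to\lnot\theta_{\textit{stable}}(G'))$; groundedness is thus encoded as \emph{minimality} among stable sets, a one-line monadic condition. Your $\forall S$-formulation is a more direct transcription of the stage construction and comes with an explicit equivalence argument, whereas the paper relies implicitly on the (standard) fact that minimal stable sets coincide with generating-default sets. Both are MSO; the paper's is shorter, yours is more self-contained.

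Second, for the justification test the paper keeps the elements $\lnot\beta$ in the universe of $\struc{A}_{(W,D)}$ (this is why the universe is taken over $W\Cup D\cup\{\lnot\beta\mid\frac{\alpha:\beta}{\gamma}\in D\}$) and locates them semantically via an auxiliary formula $\theta_{\textit{isneg}}(\beta,\overline\beta)$; it then checks $W\cup C\models\overline\beta$. You instead phrase $\lnot\beta\notin E$ as satisfiability of $W\cup\concl(G)\cup\{\beta\}$, which is equivalent and has the pleasant side effect of not needing the extra $\lnot\beta$-elements at all. Either choice works; yours is arguably cleaner, the paper's stays closer to the syntactic form of Reiter's condition.
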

\begin{proof}
	First the formula $\theta_{\textit{isneg}}$ expresses the fact that one formula is the negation of another formula:
	$
		\theta_{\textit{isneg}}(\varphi,\overline\varphi)\eqdef \theta_{\textit{struc}}\land\forall M\Big(\theta_{\textit{assign}}(M)\to\big(M(\varphi)\leftrightarrow \lnot M(\overline\varphi)\big)\Big).
	$
	Observe that $\varphi$ and $\overline\varphi$ are not formulae but placeholders for individuals. 
  The following two formulae define the applicability of defaults, \ie, whether a premise $\alpha$ is entailed or a justification $\beta$ is violated which uses the shortcut $\chi(C,M,x)\eqdef(\kb(x)\lor C(x))\to M(x)$:
  \begin{align*}
  		\theta_{W\cup C\models\alpha}(C,\alpha) &\eqdef \forall M\Big(\theta_{\textit{assign}}(M)\to 
		\forall x\Big(\chi(C,M,x)\to M(\alpha)\Big)\Big),\\
  		\theta_{W\cup C\models\lnot\beta}(C,\beta) &\eqdef\exists\overline\beta\exists M\!\Bigg(\!\theta_{\textit{assign}}(M)\!\to\!
			\forall x\Big(\chi(C,M,x)\land M(\overline\beta)\land \theta_{\textit{isneg}}(\beta,\overline\beta)\!\Big)\!\!\Bigg).
	\end{align*}

 \noindent Now we can define the MSO-formulae $\theta_{\textit{app}}$ (a default $d$ is applicable), $\theta_{\textit{stable}}$ (a set of defaults is stable), $\theta_{\textit{gd}}$ (a set of defaults is generating) as follows.
  \begin{align*}
    \theta_{\textit{app}}(d,G) \eqdef  \, 
      & 
			\exists\alpha\exists\beta\exists C\Big(\prem(\alpha,d)\land \just(\beta,d)\land\\
				\forall x\big(C(x)&\leftrightarrow\exists y(G(y)\land \concl(x,y))\big)\land 
				\theta_{W\cup C\models\alpha}(C,\alpha)\land\lnot\theta_{W\cup C\models\lnot\beta}(C,\beta)\Big)
      \\
    \theta_{\textit{stable}}(G) \eqdef  \, 
      &
      \forall d \big(\default(d)\land(G(d) \leftrightarrow \theta_{\textit{app}}(d,G))\big)
      \\
    \theta_{\textit{gd}}(G) \eqdef  \, 
      & 
      \theta_{\textit{stable}}(G) \land \forall G'(G' \subsetneq G \imp \neg \theta_{\textit{stable}}(G')) 
  \end{align*}
  Then $\theta_{\textit{extension}} \eqdef  \theta_{\textit{struc}} \land \exists G(\theta_{\textit{gd}}(G))$ is true under $\struc{A}_{(W,D)}$ iff $(W,D)$ has a stable extension.\qed
\end{proof}

As a consequence of Lemma~\ref{lem:mso-ext}, we obtain from Courcelle's Theorem~\cite{courcelle90} and the logspace version of Elberfeld \etal\,\cite{ebjata10} that, given the tree width of $\struc{A}_{(W,D)}$ as a parameter, the extension existence problem for default logic is fixed-parameter tractable, and in fact, in $\PLS$.

\begin{theorem} \label{thm:dl-FPT-for-bounded-tw}
  Let $B$ be a finite set of Boolean functions, 
  let $k \in \N$ be fixed, 
  and let $(W,D)$ be a $B$-default theory such that $\struc{A}_{(W,D)}$ has tree width bounded by $k$. 
  Then the extension existence problem for $B$-default logic is solvable in time $O(f(k)\cdot|(W,D)|)$ and space $O(\log|(W,D)|)$.
\end{theorem}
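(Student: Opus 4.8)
The plan is to combine the MSO-definability result of Lemma~\ref{lem:mso-ext} with the algorithmic metatheorems already cited in the excerpt, namely Courcelle's Theorem \cite{courcelle90} for the time bound and the logspace version of Elberfeld \etal\ \cite{ebjata10} for the space bound. By Lemma~\ref{lem:mso-ext} there is a single fixed MSO-formula $\theta_{\textit{extension}}$ over the vocabulary $\tau_{B,\textit{dl}}$ such that a $B$-default theory $(W,D)$ has a stable extension exactly when $\struc{A}_{(W,D)}\models\theta_{\textit{extension}}$. Crucially, $\theta_{\textit{extension}}$ depends only on the finite set $B$ and not on the particular theory, so its length is a constant once $B$ is fixed; this is what lets the dependence on the formula be absorbed into the function $f(k)$.

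First I would argue that the representation structure $\struc{A}_{(W,D)}$ can be constructed from the input theory $(W,D)$ within the claimed resource bounds, and that its size is linear in $|(W,D)|$: the universe consists of the subformulae of $W\Cup D\cup\{\lnot\beta\mid\frac{\alpha:\beta}{\gamma}\in D\}$ together with one element per default, and each relation of $\tau_{B,\textit{dl}}$ is read off directly from the syntactic structure. Since $B$ is fixed, the vocabulary is of constant size, and the structure is logspace-computable from $(W,D)$. Next I would invoke the metatheorems: given a graph (or relational structure) of tree width bounded by $k$ together with a fixed MSO sentence, Courcelle's Theorem decides the sentence in time $f(k)\cdot n$, where $n$ is the size of the structure and $f$ depends only on the sentence and $k$; the Elberfeld \etal\ refinement yields the same decision in space $O(\log n)$ for fixed $k$ and fixed sentence. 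Applying these to $\struc{A}_{(W,D)}$ and $\theta_{\textit{extension}}$, and using $n=O(|(W,D)|)$, gives the time bound $O(f(k)\cdot|(W,D)|)$ and the space bound $O(\log|(W,D)|)$.

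The step I expect to require the most care is the bookkeeping about tree width: the hypothesis bounds the tree width of $\struc{A}_{(W,D)}$, which is the structure over $\tau_{B,\textit{dl}}$, whereas the metatheorems are typically stated for the tree width of the Gaifman graph or incidence graph of the relational structure. Since the vocabulary $\tau_{B,\textit{dl}}$ has bounded arity (all relations are unary or binary) and fixed size, these notions of tree width differ only by a constant factor depending on $B$, so a tree decomposition of width $k$ for $\struc{A}_{(W,D)}$ translates into one of width $O(k)$ for the structure required by the metatheorem; this constant is again folded into $f$. One should also note that the metatheorems assume a tree decomposition is either given or computable within the budget, which is guaranteed for fixed $k$ by Bodlaender's algorithm in the time setting and by the logspace tree-decomposition procedure underlying \cite{ebjata10} in the space setting. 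With these observations the theorem follows, establishing that the extension existence problem for $B$-default logic parameterized by the tree width of $\struc{A}_{(W,D)}$ lies in $\PLS$.
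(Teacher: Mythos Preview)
Your proposal is correct and follows exactly the route taken in the paper: the theorem is stated there as an immediate consequence of Lemma~\ref{lem:mso-ext} together with Courcelle's Theorem and the logspace version of Elberfeld \etal, with no further argument given. The additional care you take concerning the logspace constructibility of $\struc{A}_{(W,D)}$, the bounded-arity tree-width bookkeeping, and the availability of a tree decomposition is sound and simply makes explicit what the paper leaves implicit.
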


%

So again and maybe with no big surprise, similar to the study by Gottlob et al.~\cite{gopiwe10} for different nonmonotonic formalisms, we see here that bounding the tree width of a default theory yields time and space efficient algorithms for satisfiability. In the following we want to contrast this with a strong lower bound. We consider knowledge bases with very simple defaults rules, namely consisting only of literals (and in a second step even only propositions). Then we consider any parameterization of the extension existence problem that is bounded for all knowledge bases that obey this restriction. It follows that even for these very restricted knowledge bases, the parameterized extension existence problem is not even in the class $\XPnu$, unless $\P\neq\NP$.

We want to point out that this theorem comprises for example the usual parameterizations for $\SAT$ (in terms of, e.g., backdoor sets or formula tree width): For all these, we have $\FPT$-algorithms for propositional satisfiability, but still the extension existence problem is not in $\XPnu$.

\begin{theorem} \label{thm:dl-not-in-XP}
  Let $B$ be a finite set of Boolean functions such that $\neg \in [B \cup \{\true\}]$ and
  let $\mathbf{D}$ be the set of sets $D$ of default rules such that each default 
  $d \in D$ is composed of literals only.
  Further let $\kappa$ be a parameterization function
  for which 
  there exists a $c \in \N$ 
  such that $\kappa\big((\emptyset,D)\big) < c$ for all $D \in \mathbf{D}$.
  If $\P\neq\NP$, then the extension existence problem for $B$-default logic, parameterized by $\kappa$, is not contained in $\XPnu$.
\end{theorem}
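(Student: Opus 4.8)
The plan is to reduce the whole statement to a single slice-local hardness fact and then turn the boundedness of $\kappa$ on the family $\{(\emptyset,D)\mid D\in\mathbf{D}\}$ into a pigeonhole argument. Concretely I would first establish the auxiliary claim that the extension existence problem, \emph{restricted to inputs $(\emptyset,D)$ with $D\in\mathbf{D}$}, is already $\NP$-hard; everything else is then a short counting argument over the finitely many slices that such inputs can occupy.

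For the hardness core I would reduce from $3\SAT$. Given $\phi=\bigwedge_{j=1}^{m}C_j$ with $C_j=\ell_{j,1}\vee\ell_{j,2}\vee\ell_{j,3}$ over $x_1,\dots,x_n$, set $W=\emptyset$ and let $D_\phi$ consist only of literal rules (negative literals being expressible since $\neg\in[B\cup\{\true\}]$). A \emph{choice} gadget $\frac{:x_i}{x_i}$ and $\frac{:\neg x_i}{\neg x_i}$ forces every consistent stable extension to contain exactly one of $x_i,\neg x_i$, so that stable extensions encode total assignments $\sigma$ (an inconsistent $E$ is never stable, since then no justification survives and $\Gamma(E)$ collapses to $\Th{\emptyset}$). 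To detect a \emph{violated} clause I would cascade single-justification rules $\frac{:\neg\ell_{j,1}}{a_{j,1}}$, $\frac{a_{j,1}:\neg\ell_{j,2}}{a_{j,2}}$, $\frac{a_{j,2}:\neg\ell_{j,3}}{a_{j,3}}$; reading ``justification $\neg\ell$'' as the condition ``$\ell\notin E$'', the fresh literal $a_{j,3}$ is derivable exactly when all three literals of $C_j$ are absent, i.e.\ when $\sigma$ falsifies $C_j$. Finally a self-defeating \emph{killer} $\frac{a_{j,3}:\neg f}{f}$, with a fresh proposition $f$, admits no stable extension once $a_{j,3}$ is present: if $f\notin E$ it fires and forces $f\in E$, while if $f\in E$ then nothing could have derived $f$. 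Hence a stable extension exists iff some $\sigma$ falsifies no clause, i.e.\ iff $\phi$ is satisfiable; the construction is logspace-computable and every rule is literal, witnessing $\NP$-hardness of the restricted problem.

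For the main step I would assume towards a contradiction that $(\EXT,\kappa)\in\XPnu$, so that every slice $(\EXT,\kappa)_\ell$ lies in $\P$. By hypothesis there is a constant $c$ with $\kappa\big((\emptyset,D)\big)<c$ for all $D\in\mathbf{D}$, so each reduction output $(\emptyset,D_\phi)$ lands in one of the finitely many slices $(\EXT,\kappa)_0,\dots,(\EXT,\kappa)_{c-1}$. Since membership of such an input in $\EXT$ is equivalent to its membership in \emph{some} of these $c$ slices (its true $\kappa$-value being $<c$), I can decide ``$\phi\in 3\SAT$'' by building $(\emptyset,D_\phi)$ and taking the disjunction of the $c$ polynomial-time slice tests; this is a polynomial-time procedure because $c$ is fixed and $\kappa$ need not even be evaluated. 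That puts the $\NP$-hard core, and hence $3\SAT$, in $\P$, forcing $\P=\NP$, and contraposition yields the theorem.

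I expect the real work to sit entirely in the hardness core, and within it in simulating the inherently conjunctive clause test with rules whose prerequisite and justification are each a single literal: the positive part of a literal default theory can only propagate disjunctively, so the ``all literals false'' condition of a clause must be assembled by cascading the single justifications $\neg\ell_{j,1},\neg\ell_{j,2},\neg\ell_{j,3}$ and detonating the odd loop $\frac{a_{j,3}:\neg f}{f}$. Verifying that these gadgets do not interfere — that the $a_{j,k}$ and $f$ are grounded strictly after the assignment literals and never feed back into the choice — is the one place where the fix-point definition of stable extensions has to be checked with care.
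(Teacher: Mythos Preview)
Your proposal is correct and follows exactly the paper's two-step strategy: establish $\NP$-hardness of $\EXT$ restricted to theories $(\emptyset,D)$ with literal-only defaults, then use the boundedness of $\kappa$ on this family to collapse the finitely many relevant $\XPnu$-slices into a single polynomial-time algorithm, contradicting $\P\neq\NP$. The only difference is that the paper outsources the hardness step by citing Lemma~5.6 of Beyersdorff et al., whereas you spell out an explicit $3\SAT$ reduction (choice gadgets, cascaded clause detectors, and a self-defeating killer); your gadgetry is standard and sound, so this is merely a matter of detail rather than a different route.
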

\begin{proof}
  The reduction from $\SAT$ to default logic restricted to default theories 
  with $W=\emptyset$ and default rules composed of literals only, shown in Lemma 5.6 of \cite{beyersdorffMTV09}, proves that the extension existence problem of default logic restricted to theories of this kind (which will be denoted by $\EXT'$) is $\NP$-hard.
  Now let $\kappa$ be such a parameterization and suppose $\P\neq\NP$. For contradiction assume $(\EXT',\kappa)\in\XPnu$. Hence, by definition of $\XPnu$, it holds $(\EXT',\kappa)_\ell\in\P$ for every $\ell\in\N$. As also $\ell<c$ holds we can compose a deterministic polynomial time algorithm which solves $\EXT'$. This contradicts $\P\neq\NP$ and concludes the proof.\qed
\end{proof}

\begin{theorem} \label{thm:dl-not-in-XL}
  Let $B$ be a finite set of Boolean functions such that $\false \in [B]$ and
  let $\mathbf{D}$ be the set of sets $D$ of default rules such that each default 
  $d \in D$ is composed of propositions or the constant $\false$ only.
  Further let $\kappa$ be a parameterization function
  for which 
  there exists a $c \in \N$ 
  such that $\kappa\big((W,D)\big) < c$ for all $D \in \mathbf{D}$ and all $W$ that consists of at most one proposition.
  If $\LOGSPACE\neq\NL$, then the extension existence problem for $B$-default logic, parameterized by $\kappa$, is not contained in $\XLnu$.
\end{theorem}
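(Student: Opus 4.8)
The plan is to follow the template of the proof of Theorem~\ref{thm:dl-not-in-XP}, with $\NL$ and $\LOGSPACE$ playing the roles of $\NP$ and $\P$. Let $\EXT''$ denote the extension existence problem restricted to those $B$-default theories $(W,D)$ in which $W$ consists of at most one proposition and every default in $D$ is composed of propositions or the constant $\false$ only. Once $\EXT''$ is shown to be $\NL$-hard under $\leqlogm$-reductions, the remainder is the same finite-slice bookkeeping as before. Since no negation is assumed here (only $\false\in[B]$), the reduction must be built from normal default rules and a single $\false$-producing rule.

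For the hardness I would reduce from the complement of directed $s$-$t$-reachability; this problem is $\NL$-complete because $\NL$ is closed under complementation (Immerman--Szelepcs\'enyi). Given a digraph $G=(V,A)$ with vertices $s,t$, map it to the theory with $W=\{s\}$, one normal default $\frac{u:v}{v}$ for every arc $(u,v)\in A$, and one trap $\frac{t:t}{\false}$. Every formula is a proposition except the conclusion $\false$, which is a $B$-formula because $\false\in[B]$, and the map is clearly logspace-computable. For correctness: if $t$ is not reachable from $s$, then the deductive closure $E$ of the set of vertices reachable from $s$ is a consistent stable extension, since the normal defaults exactly generate the reachable vertices and the trap cannot fire ($t\notin E$). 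If, on the other hand, $t$ is reachable, then no stable extension exists: an inconsistent extension would force $W$ to be inconsistent, which it is not; and any consistent extension is the deductive closure of a set of positive atoms and hence contains no literal $\lnot v$, so along an $s$-$t$-path every normal default is unblocked and forces $t$ into the extension, at which point $\frac{t:t}{\false}$ fires and derives $\false$, contradicting consistency. Hence $(W,D)$ has an extension iff $t$ is not reachable, and $\EXT''$ is $\NL$-hard.

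It then remains to assume $\LOGSPACE\neq\NL$ and derive a contradiction from $(\EXT'',\kappa)\in\XLnu$. By definition $(\EXT'',\kappa)_\ell\in\LOGSPACE$ for each $\ell\in\N$, say via a logspace decider $A_\ell$. Because $\kappa$ is bounded by $c$ on all instances of $\EXT''$, we have $\EXT''=\bigcup_{\ell<c}(\EXT'',\kappa)_\ell$, and an input belongs to $\EXT''$ iff some $A_\ell$ with $\ell<c$ accepts it; running these finitely many logspace machines one after another and taking their disjunction (which never requires evaluating $\kappa$) shows $\EXT''\in\LOGSPACE$. With the $\NL$-hardness of $\EXT''$ this gives $\NL\subseteq\LOGSPACE$, contradicting the assumption. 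I expect the main obstacle to be the correctness of the reduction in the reachable case: one must exclude \emph{every} stable extension, which hinges on the observation that a consistent extension is generated solely by positive atoms and therefore cannot block the propagation of the normal defaults along a path to $t$.
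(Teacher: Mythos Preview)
Your proposal is correct and follows essentially the same route as the paper: establish $\NL$-hardness of the restricted extension existence problem via a logspace reduction from (a variant of) directed $s$--$t$ reachability, then derive a contradiction from membership in $\XLnu$ by the finite-slice argument of Theorem~\ref{thm:dl-not-in-XP}. The only differences are cosmetic: the paper outsources the hardness step to a reduction from $\GAP$ proved elsewhere rather than spelling it out, and you reduce from non-reachability rather than reachability, which is equivalent via Immerman--Szelepcs\'enyi.
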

\begin{proof}
  The reduction from $\GAP$ to default logic restricted to default theories 
  with $|W|\leq 1$ and default rules composed of propositions or the constant $\false$ only, shown in Lemma 5.8 of \cite{beyersdorffMTV09}, proves that the extension existence problem of default logic restricted to theories of this kind (which will be denoted by $\EXT'$) is $\NL$-hard.
  Following the argumentation in the proof of Theorem~\ref{thm:dl-not-in-XP}, we conclude for $\LOGSPACE\neq\NL$ and $(\EXT',\kappa)\in\XLnu$ that $(\EXT',\kappa)_\ell\in\LOGSPACE$ holds for every $\ell$. This eventually leads to the desired contradiction proving the theorem.\qed
\end{proof}


\section{Autoepistemic Logic}
\label{sect:AL}

Let $B$ be a finite set of Boolean functions. 
To any set $\Sigma$ of autoepistemic $B$-formulae, we associate a 
$\tau_{B,ae} \eqdef  \tau_B \cup \{L^1,\repr^1\}$-structure $\struc{A}_{\Sigma}$ 
such that the universe of $\struc{A}_\Sigma$ is the union of the set of subformulae of 
$\Sigma \cup \{ \neg L\varphi \mid L\varphi \in \SF{\Sigma}\}$, 
the relations from $\tau_B$ are interpreted as in Section~\ref{sect:mso-enc}, and
$L(x)$ holds iff the subformulae represented by $x$ is prefixed by an $L$, and $\repr(x)$ holds iff $x$ represents a formula in $\Sigma$.

\begin{lemma} \label{lem:mso-exp}
  Let $B$ be a finite set of Boolean functions and
  let $\Sigma$ be a set of autoepistemic $B$-formulae.
  There exists an MSO-formula $\theta$ such that $\Sigma$ possesses a stable expansion 
  iff $\struc{A}_{\Sigma} \models \theta$.
\end{lemma}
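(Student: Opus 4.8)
The plan is to sidestep reasoning about stable expansions directly — these are infinite, deductively closed sets — and instead exploit the correspondence of Niemelä recalled in the preliminaries: $\Sigma$ has a stable expansion if and only if there exists a $\Sigma$-full set $\Lambda \subseteq \SFL\Sigma \cup \lnot\SFL\Sigma$. Since such a $\Lambda$ is a subset of the finitely many $L$-prefixed subformulae of $\Sigma$ together with their negations, and every one of these nodes already lives in the universe of $\struc{A}_\Sigma$, a full set can be captured by a single monadic second-order variable. Hence the target formula will have the shape $\theta \eqdef \theta_{\textit{struc}}^{ae} \land \exists\Lambda\,\theta_{\textit{full}}(\Lambda)$, where $\theta_{\textit{struc}}^{ae}$ is the adaptation of $\theta_{\textit{struc}}$ from Lemma~\ref{lem:mso-sat} that, besides the $B$-connectives, also admits $L$-marked nodes as atomic, and $\theta_{\textit{full}}(\Lambda)$ expresses that the set selected by $\Lambda$ is $\Sigma$-full.

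The heart of the construction is an MSO rendering of the propositional consequence $\Sigma\cup\Lambda \models \varphi$, in which every $L$-prefixed subformula is treated as an atomic proposition with respect to $\models$. I would first build a variant $\theta_{\textit{assign}}^{ae}(M)$ of the assignment predicate of Lemma~\ref{lem:mso-sat}: it constrains $M$ exactly as $\theta_{\textit{assign}}$ on nodes whose top connective lies in $B$, but leaves $M$ unconstrained on $L$-marked nodes, treating them like the $\var$-nodes (free atoms). Reusing the implication pattern of Lemma~\ref{lem:mso-imp}, consequence is then
\[
  \theta_{\models}(\Lambda,\alpha) \eqdef \forall M\Big(\big(\theta_{\textit{assign}}^{ae}(M) \land \forall x\,((\repr(x)\lor \Lambda(x)) \imp M(x))\big) \imp M(\alpha)\Big),
\]
which states that every atomic assignment satisfying all formulae of $\Sigma$ (the $\repr$-nodes) and all literals selected by $\Lambda$ also satisfies $\alpha$. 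Because $\Lambda$ ranges over both $L\psi$-nodes and $\lnot L\psi$-nodes, requiring $M$ to be true on the whole of $\Lambda$ uniformly forces the intended truth value of each guessed $L$-atom, so no case split on the polarity of the selected literals is needed here.

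With consequence available, fullness is a direct transcription of Niemelä's definition: for each node $\ell$ with $L(\ell)$, i.e.\ $\ell = L\varphi \in \SFL\Sigma$, let $\varphi$ be its argument and $\overline\ell$ the node representing $\lnot L\varphi$, and require $\Lambda\subseteq\SFL\Sigma\cup\lnot\SFL\Sigma$ together with
\[
  \forall\ell\,\big(L(\ell) \imp \big[(\Lambda(\ell) \leftrightarrow \theta_{\models}(\Lambda,\varphi)) \land (\Lambda(\overline\ell) \leftrightarrow \lnot\theta_{\models}(\Lambda,\varphi))\big]\big).
\]
To turn this into an honest MSO-formula I must navigate the structure from $\ell$ to both $\overline\ell$ and $\varphi$: the negation node $\overline\ell$ is pinned down semantically by the $\theta_{\textit{isneg}}$ gadget already introduced in Lemma~\ref{lem:mso-ext} (it is the unique node that is the negation of $\ell$), while $\varphi$ is reached through the edge of the formula tree recording the argument of the $L$-operator attached to $\ell$. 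I expect this bookkeeping — expressing consequence while keeping $L$-subformulae atomic, and correctly wiring each $L$-node to its argument and to its negation — to be the main obstacle; everything else is routine reuse of the machinery of Section~\ref{sect:mso-enc}. Once it is in place, correctness is immediate: $\struc{A}_\Sigma \models \theta$ iff some $\Lambda$ is $\Sigma$-full iff, by Niemelä's one-to-one correspondence, $\Sigma$ has a stable expansion.
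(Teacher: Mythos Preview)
Your proposal is correct and follows essentially the same route as the paper: reduce to the existence of a $\Sigma$-full set via Niemel\"a's correspondence, express propositional consequence $\Sigma\cup\Lambda\models\varphi$ by an MSO-formula quantifying over all assignments (treating $L$-nodes as atoms), and wrap the fullness condition in an existential set quantifier over~$\Lambda$. The only cosmetic differences are that the paper locates the node $\lnot L\varphi$ via the syntactic edge $\conn_\neg$ rather than the semantic gadget $\theta_{\textit{isneg}}$, and phrases your two biconditionals as one biconditional together with an explicit xor-clause.
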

\begin{proof}
  For a set of formulae $G$ and a formula $\varphi$,
  similar to $\theta_{W\cup C\models\alpha}(C,\alpha)$ in the proof of Lemma~\ref{lem:mso-ext}, define
  be the MSO-formula  
	\begin{align*}
	 \theta_{\Sigma \cup \Lambda \models \varphi}(\Lambda,\varphi)\eqdef \forall M\Big(\theta_{\textit{assign}}(M)\to 
		\forall x\Big(\big((\repr(x)\lor \Lambda(x))\to M(x)\big)\to M(\varphi)\!\Big)\!\Big)
	\end{align*}
  to test for $\Sigma \cup \Lambda \models \varphi$.
  Now define the MSO-formula $\theta_{\full}$ as 
  \begin{align*}
    \theta_{\full}(\Lambda) \eqdef  \, 
      & 
      \forall x\Big(L(x) \imp \big(\Lambda(x) \xor \exists y (\conn_\neg(x,y) \land \Lambda(y))\big)\Big) \land \\
      &
      \forall x\Big(L(x) \imp \big(\Lambda(x) \leftrightarrow \theta_{\Sigma \cup \Lambda \models \varphi}(\Lambda,x) \big)\Big)
  \end{align*}
  Then $\theta \eqdef  \theta_{\textit{struc}} \land \exists \Lambda(\theta_{\full}(\Lambda))$ is true under $\struc{A}_{\Sigma}$ iff $\Sigma$ has a $\Sigma$-full set $\Lambda$, which is the case iff $\Sigma$ has a stable expansion.\qed
\end{proof}

As above we obtain from Lemma~\ref{lem:mso-exp} that, given the tree width of $\struc{A}_{\Sigma}$ as a parameter, 
the expansion existence problem for autoepistemic logic is fixed-parameter tractable, and in fact in $\PLS$.

\begin{theorem} \label{thm:ael-FPT-for-bounded-tw}
  Let $B$ be a finite set of Boolean functions, 
  let $k \in \N$ be fixed, and let $\Sigma$ be a set of autoepistemic $B$-formulae such that $\struc{A}_{\Sigma}$ has tree width bounded by $k$. 
  Then the expansion problem is solvable in time $O(f(k)\cdot|\Sigma|)$ and space $O(\log|\Sigma|)$.
\end{theorem}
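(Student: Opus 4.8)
The plan is to mirror the two previous \emph{FPT}-for-bounded-tree-width results (Theorems~\ref{thm:imp-FPT-for-bounded-tw} and~\ref{thm:dl-FPT-for-bounded-tw}) and simply invoke the machinery already assembled in this paper. By Lemma~\ref{lem:mso-exp} there is a fixed MSO-formula $\theta$ over the vocabulary $\tau_{B,ae}$ such that, for every set $\Sigma$ of autoepistemic $B$-formulae, the structure $\struc{A}_{\Sigma}$ satisfies $\theta$ exactly when $\Sigma$ has a stable expansion. Crucially, $\theta$ depends only on the finite set $B$ and not on $\Sigma$, so its length is a constant once $B$ is fixed. This reduces the expansion existence problem to an MSO model-checking query on the relational structure $\struc{A}_{\Sigma}$.

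First I would note that the tree width of $\struc{A}_{\Sigma}$ is bounded by the fixed constant $k$ by hypothesis. The time bound then follows from Courcelle's Theorem~\cite{courcelle90}: deciding whether a structure of tree width at most $k$ satisfies a fixed MSO sentence takes time linear in the size of the structure, with the constant factor absorbed into some computable function $f(k)$ of the tree width and the (constant) size of $\theta$. Since $|\struc{A}_{\Sigma}| = O(|\Sigma|)$ — the universe consists of the subformulae of $\Sigma$ together with the negated $L$-subformulae, which is linear in $|\Sigma|$ — this gives the claimed running time $O(f(k)\cdot|\Sigma|)$. For the space bound I would appeal to the logspace version of Elberfeld \etal~\cite{ebjata10}, which strengthens Courcelle's Theorem to show that MSO model-checking on bounded-tree-width structures can be carried out in space $O(\log|\struc{A}_{\Sigma}|) = O(\log|\Sigma|)$, placing the parameterized problem in $\PLS$.

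The only genuine step beyond quoting these two external theorems is verifying that the structure $\struc{A}_{\Sigma}$ can itself be constructed from $\Sigma$ within the stated bounds, and that its size is linear in $|\Sigma|$; both are routine from the definition of $\struc{A}_{\Sigma}$ given just before Lemma~\ref{lem:mso-exp}. I would therefore expect no real obstacle here: the conceptual work has already been done in Lemma~\ref{lem:mso-exp}, whose MSO-encoding of $\Sigma$-fullness (and hence, via Niemelä's correspondence, of stable expansions) is precisely what makes Courcelle's Theorem applicable. The proof is thus essentially a one-line invocation, exactly parallel to the remark preceding Theorem~\ref{thm:dl-FPT-for-bounded-tw} for default logic.
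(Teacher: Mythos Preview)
Your proposal is correct and follows exactly the paper's approach: the paper itself gives no explicit proof but simply remarks that the theorem follows from Lemma~\ref{lem:mso-exp} together with Courcelle's Theorem~\cite{courcelle90} and the logspace version of Elberfeld \etal~\cite{ebjata10}, which is precisely what you spell out.
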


On the other hand, analogues of Theorems~\ref{thm:dl-not-in-XP} and \ref{thm:dl-not-in-XL} are easily obtained:

\begin{theorem} \label{thm:ael-not-in-XP}
  Let $B$ be a finite set of Boolean functions such that $\lor \in [B \cup \{\false,\true\}]$ and 
  let $\mathbf{\Sigma}$ be the set of sets $\Sigma$ of autoepistemic $B$-formulae such that all $\varphi \in \Sigma$ are disjunctions of propositions or $L$-prefixed propositions.
  Further let $\kappa$ be a parameterization function
  for which 
  there exists a $c \in \N$ 
  such that $\kappa(\Sigma) < c$ for all $\Sigma \in \mathbf{\Sigma}$.
  If $\P\neq\NP$, then the expansion existence problem for sets of autoepistemic $B$-formulae, parameterized by $\kappa$, is not contained in $\XPnu$.
\end{theorem}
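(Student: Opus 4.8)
The plan is to follow the proof of Theorem~\ref{thm:dl-not-in-XP} almost line for line, substituting the autoepistemic ingredients for the default-logic ones. Writing $\EXP'$ for the expansion existence problem restricted to instances in $\mathbf{\Sigma}$, the argument has two parts: (i)~$\EXP'$ is already $\NP$-hard, and (ii)~a parameterization $\kappa$ that is bounded by a constant $c$ on all of $\mathbf{\Sigma}$ forces any $\XPnu$ algorithm for $\EXP'$ down to ordinary polynomial time. Since $\mathbf{\Sigma}$ is a polynomial-time recognizable syntactic class, $(\EXP,\kappa) \in \XPnu$ would entail $(\EXP',\kappa) \in \XPnu$ (each slice of the restriction is the intersection of a slice of $\EXP$ with $\mathbf{\Sigma}$, hence in $\P$); so it suffices to refute $(\EXP',\kappa) \in \XPnu$.

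For part~(i) I would invoke (or construct) a polynomial-time many-one reduction from $\SAT$ to $\EXP'$, the autoepistemic counterpart of the default-logic reduction of \cite{beyersdorffMTV09}. The hypothesis $\lor \in [B \cup \{\false,\true\}]$ ensures that arbitrary disjunctions are expressible as $B$-formulae, so that the image of the reduction genuinely consists of disjunctions of propositions and $L$-prefixed propositions and thus lies in $\mathbf{\Sigma}$. Correctness is cleanest through Niemelä's $\Sigma$-full sets: one arranges $\Sigma$ so that the binary choice ``$Lp \in \Lambda$'' versus ``$\neg Lp \in \Lambda$'' available to a $\Sigma$-full set $\Lambda$ encodes a truth assignment, and the disjunctive clauses enforce $\Sigma \cup \Lambda \models \varphi$ precisely for satisfying assignments. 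By the one-to-one correspondence between $\Sigma$-full sets and stable expansions, $\Sigma$ has a stable expansion iff the $\SAT$ instance is satisfiable.

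Part~(ii) is then identical to the default-logic case. Suppose $\P \neq \NP$ but $(\EXP',\kappa) \in \XPnu$, so that every slice $(\EXP',\kappa)_\ell$ is in $\P$. Because $\kappa(\Sigma) < c$ for all $\Sigma \in \mathbf{\Sigma}$, only the finitely many slices with $\ell \in \{0,\dots,c-1\}$ are nonempty, and their union is all of $\EXP'$; combining these finitely many polynomial-time algorithms yields a single deterministic polynomial-time algorithm for $\EXP'$, contradicting its $\NP$-hardness. Hence $(\EXP',\kappa) \notin \XPnu$, and therefore $(\EXP,\kappa) \notin \XPnu$.

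The entire difficulty sits in part~(i): the output of the reduction may use no explicit negation of propositions, only positive disjunctions of propositions and $L$-prefixed propositions, so every ``negative'' effect must be channelled through the $L/\neg L$ dichotomy of full sets. Verifying that such a restricted reduction still captures $\SAT$ is the substantive step; once it is in place, the collapse in part~(ii) is purely mechanical.
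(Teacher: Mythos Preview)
Your proposal is correct and follows essentially the same two-step approach as the paper: establish $\NP$-hardness of the syntactically restricted expansion existence problem via a reduction from (3-)$\SAT$, then use the bounded parameterization to collapse $\XPnu$ to $\P$. The only difference is bibliographic: the paper does not sketch the reduction but simply invokes Lemma~4.5 of \cite{creignou-meier-thomas-vollmer:10}, which already provides exactly the required reduction from $\threeSAT$ into the class $\mathbf{\Sigma}$ under the hypothesis $\lor \in [B \cup \{\false,\true\}]$.
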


\begin{proof}
  Observe that there exists a reduction $f$ from $\threeSAT$ to autoepistemic logic restricted to $B$-formulae shown in Lemma 4.5 of \cite{creignou-meier-thomas-vollmer:10}.
  This implies our claim, as membership in $\XPnu$ implies a polynomial-time algorithm for any fixed $\kappa$.\qed
\end{proof}

\begin{theorem} \label{thm:ael-not-in-XL}
  Let $B$ be a finite set of Boolean functions such that $\xor,\true \in [B]$
  Further let $\kappa$ be a parameterization function
  for which 
  there exists a $c \in \N$ 
  such that $\kappa(\Sigma) < c$ for all $\Sigma$.
  If $\LOGSPACE \neq \ParityL$, then the expansion existence problem for sets of autoepistemic $B$-formulae, parameterized by $\kappa$, is not contained in $\XLnu$.
\end{theorem}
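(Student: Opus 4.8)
The plan is to follow the same two-step pattern as in the proof of Theorem~\ref{thm:dl-not-in-XL}, only replacing the pair $\NL/\LOGSPACE$ by $\ParityL/\LOGSPACE$. First I would exhibit a logspace-many-one reduction from a $\ParityL$-complete problem to a syntactically restricted version $\EXP'$ of the expansion existence problem, thereby showing that $\EXP'$ is $\ParityL$-hard under $\leqlogm$. Then I would use the boundedness of $\kappa$ to collapse the slices of $(\EXP',\kappa)$ into a single $\LOGSPACE$ procedure, which contradicts $\LOGSPACE\neq\ParityL$.

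For the first step I would exploit the hypothesis $\xor,\true\in[B]$: since $[\{\xor,\true\}]$ is the affine clone, $[B]$ contains every affine (i.e.\ $\mathrm{GF}(2)$-linear) Boolean function, and in particular the constant $\false$ (as $x\xor x$). The natural source of hardness is a canonical $\ParityL$-complete problem such as deciding solvability of a linear system over $\mathrm{GF}(2)$ (equivalently, singularity of a $\mathrm{GF}(2)$-matrix). Exactly such a reduction into the affine fragment of the expansion existence problem is supplied by the complexity classification of $\EXP$ along Post's lattice in \cite{creignou-meier-thomas-vollmer:10}; I would invoke the corresponding lemma to obtain, for a $\ParityL$-complete problem $P$, a logspace reduction $f$ such that each $\Sigma$ in the image of $f$ has a stable expansion iff $x\in P$. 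The key point is that every $\Sigma$ emitted by $f$ lies in one fixed simple class of affine formulae (XORs of propositions and $L$-prefixed propositions together with $\true$), so by hypothesis $\kappa(\Sigma)<c$ throughout the image of $f$. Writing $\EXP'$ for this restricted problem, the reduction witnesses that $\EXP'$ is $\ParityL$-hard under $\leqlogm$, just as the reduction of Theorem~\ref{thm:ael-not-in-XP} gave $\NP$-hardness in the disjunctive case.

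For the second step, suppose toward a contradiction that $(\EXP',\kappa)\in\XLnu$. By definition $(\EXP',\kappa)_\ell\in\LOGSPACE$ for every $\ell\in\N$, and because $\kappa(\Sigma)<c$ on all relevant instances we obtain the finite decomposition $\EXP'=\bigcup_{\ell<c}(\EXP',\kappa)_\ell$. A deterministic logspace machine can then decide $\EXP'$ by running, one after another and reusing its work space, the finitely many slice deciders for $\ell=0,\dots,c-1$ and accepting iff one of them accepts; hence $\EXP'\in\LOGSPACE$. Together with the $\ParityL$-hardness of $\EXP'$ this yields $\ParityL\subseteq\LOGSPACE$, and since $\LOGSPACE\subseteq\ParityL$ holds unconditionally we conclude $\LOGSPACE=\ParityL$, contradicting the hypothesis. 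This is precisely the slice argument already used in Theorems~\ref{thm:dl-not-in-XP} and \ref{thm:dl-not-in-XL}.

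I expect the entire difficulty to reside in the first step, namely certifying correctness of the reduction; the slice collapse is routine and space-harmless because $c$ is a constant. The delicate point is that stable expansions are governed by the fixed-point $\Sigma$-full condition, so one must arrange the affine clauses and the $L$-prefixed propositions so that the self-referential belief operator faithfully propagates the $\mathrm{GF}(2)$ linear dependencies, and so that the existence of a $\Sigma$-full set---equivalently, of a stable expansion via Niemel\"a's correspondence \cite{niemela:90}---coincides exactly with the solvability/parity answer of the source instance. Verifying that the produced formulae genuinely remain inside the fixed simple affine class on which $\kappa<c$, and handling $\true$ (and the derived $\false$) consistently, is where the care is needed; once the cited lemma of \cite{creignou-meier-thomas-vollmer:10} provides this reduction, the remainder of the argument is immediate.
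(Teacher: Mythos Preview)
Your proposal is correct and follows essentially the same two-step template as the paper's own proof: first import a $\ParityL$-hardness result for the expansion existence problem from the literature, then run the slice-collapse argument exactly as in Theorems~\ref{thm:dl-not-in-XP} and~\ref{thm:dl-not-in-XL}.

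Two minor remarks on the comparison. First, the paper obtains the hardness step by citing Lemma~4.8 of \cite{implicationIPL09} (going through the implication problem), whereas you propose to cite the Post's-lattice classification of $\EXP$ in \cite{creignou-meier-thomas-vollmer:10}; either source suffices, and yours is arguably the more direct reference for an autoepistemic result. Second, your care that the reduction's image stay inside a ``fixed simple affine class'' is superfluous here: unlike Theorems~\ref{thm:dl-not-in-XP}--\ref{thm:ael-not-in-XP}, the present statement assumes $\kappa(\Sigma)<c$ for \emph{all} $\Sigma$, so no syntactic restriction on the reduction's output is needed for the slice argument to go through.
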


\begin{proof}
  Observe that there exists a reduction $f$ from the implication problem restricted to $B$-formulae shown in Lemma 4.8 of \cite{implicationIPL09}.
  This implies our claim, as membership in $\XLnu$ implies a logspace algorithm for any fixed $\kappa$.\qed
\end{proof}


We remark that similar lower bounds as given for default logic in the previous section and for autoepistemic logic here hold for the implication problem as well%
\ifreport%
, see Appendix~\ref{app:imp}%
\fi%
.

\section{Pseudo-Cliques} \label{sect:pseudo-cliques}

Looking at Theorems~\ref{thm:dl-not-in-XP} and \ref{thm:dl-not-in-XL} one might hope that the syntactic restriction imposed there, namely allowing only defaults that involve literals or propositions, is so severe that it will bound the tree width of every such input structure. Combining this with Theorem~\ref{thm:dl-FPT-for-bounded-tw} would then yield $\P=\NP$ (or $\LOGSPACE=\NL$, resp.).
Stated the other way round, if $\P\neq\NP$ then
the tree width of $\struc A_{(W,D)}$ is a non-trivial parameterization, \ie, a parameterization $\kappa$ for which there exists no $c\in\N$ such that $\kappa((\emptyset,D))<c$ holds for all $D$ consisting of defaults rules involving only literals. 

In the following we directly prove the non-triviality of the parameterization by tree width (\ie, without any complexity hypothesizes). As a tool we utilize the subsequent definition of \textit{pseudo-cliques}.

\begin{definition}\label{def:pseudo-clique}
 Let $G=(V,E)$ be an undirected graph. A \emph{pseudo-clique} is a set of vertices $V'\subseteq V$ that can be partitioned into the set of \emph{main-nodes} $V_{\textit{main}}$ and sets of \emph{edge-nodes} $V_{u,v}$ for each $u\neq v\in V_{\textit{main}}$ such that the following holds: for $v_1,\dots,v_m\in V_{u,v}$ the nodes in $V_{u,v}$ form a simple path from $u$ to $v$, \ie, it holds that $(u,v_1),(v_1,v_2),\dots,(v_{m-1},v_m),(v_m,v)\in E$ and no other edges are present.
 
 The \emph{size} of a pseudo-clique is $|V_{\textit{main}}|$, \ie, the number of main-nodes.
 The \emph{cardinality} of a pseudo-clique is $\max_{u\neq v\in V_{\textit{main}}}{|V_{u,v}|}$, \ie, the length of the longest simple path between edge-nodes.
 A pseudo-clique is said to have \emph{exact cardinality} $k$
if $\forall u,v\in V_{main}$: $|V_{u,v}| = k$.
\end{definition}

The first five pseudo-cliques of exact cardinality $1$, and one of cardinality 3 are visualized in Figure~\ref{fig:pseudo-cliques}. The thick vertices correspond to the main-nodes whereas the small dots correspond to the edge-nodes.

\begin{figure}
	\centering
\includegraphics{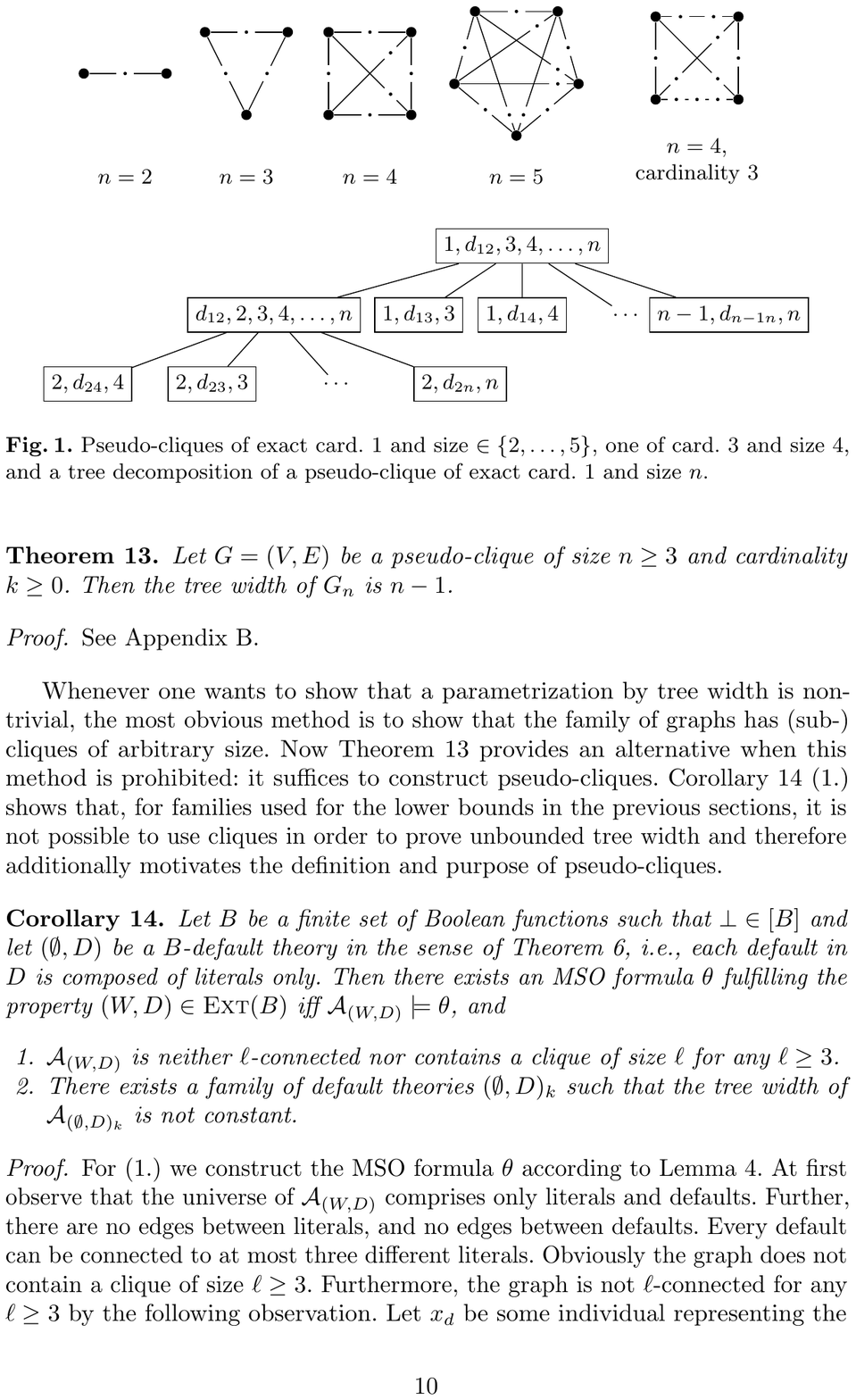}
	\caption{Pseudo-cliques of exact card. $1$ and size $\in \{2,\dots,5\}$, one of card. $3$ and size $4$, and a tree decomposition of a pseudo-clique of exact card. $1$ and size $n$.}\label{fig:pseudo-cliques}
\end{figure}

The important fact for us is the observation that pseudo-cliques of size $n$ have the same tree width as the clique of size $n$.

\begin{theorem}\label{thm:tw-pseudo-cliques-cardinality}
	Let $G=(V,E)$ be a pseudo-clique of size $n\geq 3$ and cardinality $k\geq 0$. Then the tree width of $G_n$ is $n-1$.
\end{theorem}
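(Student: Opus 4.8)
The plan is to establish the tree width of a pseudo-clique of size $n$ by two matching bounds: a lower bound of $n-1$ coming from the embedded clique structure, and an upper bound of $n-1$ coming from an explicit tree decomposition (indeed such a decomposition is already suggested in Figure~\ref{fig:pseudo-cliques}). First I would observe that a pseudo-clique $G$ of size $n$ is obtained from the complete graph $K_n$ on the main-nodes $V_{\textit{main}}$ by replacing each edge $(u,v)$ with a simple path through the edge-nodes in $V_{u,v}$; this is a subdivision of $K_n$. The key structural fact is that subdividing edges of a graph never changes its tree width once the tree width is at least $2$, and here $n\geq 3$ guarantees $\tw(K_n)=n-1\geq 2$.

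For the lower bound, I would argue that $G$ contains $K_n$ as a minor: contracting each subdivided path $V_{u,v}$ back down to a single edge between $u$ and $v$ recovers $K_n$. Since tree width is minor-monotone and $\tw(K_n)=n-1$, we get $\tw(G)\geq n-1$. The clean way to phrase the minor argument is that contracting all edge-nodes on each path into one of their main-node endpoints yields exactly $K_n$.

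For the upper bound, I would exhibit a tree decomposition of width $n-1$ directly. The idea is to take a ``core'' bag $B_0 = V_{\textit{main}}$ containing all $n$ main-nodes (size $n$, hence width $n-1$), and then for each pair $u\neq v$ hang off a caterpillar-like chain of bags handling the path $V_{u,v}$: successive bags of the form $\{u,v_1,v_2\}$, $\{v_1,v_2,v_3\}$, \dots, $\{v_{m-1},v_m,v\}$, each of size $3\leq n$, attached in sequence with the first bag of each chain adjacent to $B_0$. One then checks the three tree-decomposition axioms: every vertex and edge is covered (main-nodes and the clique-edge endpoints live in $B_0$; each edge-node and each path-edge lives in the appropriate chain bag), and the connectivity condition holds because each main-node appears only in $B_0$ and in the first/last bag of chains incident to it—so I would route each chain so that the main-node endpoints stay connected to $B_0$, for instance by keeping $u$ (or $v$) present in a contiguous initial (resp.\ final) segment of the chain, or more simply by noting each main-node occurs in $B_0$ and in exactly the boundary bags adjacent to $B_0$. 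This gives $\tw(G)\leq n-1$.

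The main obstacle I anticipate is the connectivity axiom (iii) in the upper-bound construction: the naive chain of triples $\{v_{i-1},v_i,v_{i+1}\}$ keeps the interior edge-nodes connected, but each endpoint main-node $u$ and $v$ must remain connected through the tree to the central bag $B_0$. Since $u$ appears in $B_0$ and in the first bag of the $u$–$v$ chain, I must make sure these two occurrences are adjacent (attach each chain's first bag directly to $B_0$) and that $u$ does not reappear elsewhere in the interior of the chain. Handling the degenerate cardinalities—$k=0$ (no edge-nodes, so $G=K_n$ outright) and small $k$—is a routine boundary check rather than a real difficulty. I would note that because the bound must hold for every cardinality $k\geq 0$, the decomposition above works uniformly since each chain bag has size exactly $3\leq n$ regardless of the path length.
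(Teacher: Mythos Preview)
Your approach is sound and in fact cleaner than the paper's, but your explicit upper-bound decomposition as written does not satisfy the connectivity axiom for the second main-node $v$: with bags $\{u,v_1,v_2\},\{v_1,v_2,v_3\},\dots,\{v_{m-1},v_m,v\}$ attached to $B_0$ only at the first bag, the vertex $v$ lies in $B_0$ and in the last bag of the chain, with all intermediate bags omitting it. Your proposed patches do not quite close this: keeping $u$ in an initial segment and $v$ in a final segment still leaves $v$'s bags disconnected from $B_0$, and attaching the last bag to $B_0$ as well would create a cycle. The correct (and standard) fix is to keep one endpoint, say $v$, in \emph{every} bag of the chain: take $\{u,v_1,v\},\{v_1,v_2,v\},\dots,\{v_{m-1},v_m,v\}$. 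Then every bag has size $3\leq n$, all path edges are covered, $u$ occurs only in $B_0$ and the adjacent first bag, and $v$ is present throughout. With this repair your argument is complete.

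The paper takes a genuinely different route. Rather than invoking minor-monotonicity for the lower bound and building a decomposition from scratch for the upper bound, it starts from an \emph{arbitrary} tree decomposition $(T,X)$ of $G_n$ and transforms it: for each pair $(i,j)$ it replaces every occurrence of any edge-node $d^r_{i,j}$ in the bags by the main-node $j$, then hangs a fresh chain of size-$3$ bags off a suitable bag to reintroduce the edge-nodes. The result is a decomposition of the same or smaller width in which every bag of size greater than $3$ contains only main-nodes (giving width $\leq n-1$), and whose restriction to main-nodes is a valid tree decomposition of $K_n$ (giving width $\geq n-1$). Your argument is more modular and leans on the well-known minor-monotonicity of tree width; the paper's argument is self-contained and yields both bounds from a single normalization step, at the cost of a slightly more delicate transformation.
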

\ifreport
\begin{proof}
 Is proven in Appendix~\ref{app:pseudo}.
\end{proof}
\fi

Whenever one wants to show that a parametrization by tree width is non-trivial, the most obvious method is to show that the family of graphs has (sub-) cliques of arbitrary size.
Now Theorem~\ref{thm:tw-pseudo-cliques-cardinality} provides an alternative when this method is prohibited: it suffices to construct pseudo-cliques. Corollary~\ref{cor:tw-nontriv-DL} (\ref{num:no-cliques}.) shows that, for families used for the lower bounds in the previous sections, it is not possible to use cliques in order to prove  unbounded tree width  and therefore additionally motivates the definition and purpose of pseudo-cliques.

\begin{corollary}\label{cor:tw-nontriv-DL}Let $B$ be a finite set of Boolean functions such that $\false \in [B]$ and let $(\emptyset,D)$ be a $B$-default theory in the sense of Theorem~\ref{thm:dl-not-in-XP}, i.e., each default in $D$ is composed of literals only. Then there exists an MSO formula $\theta$ fulfilling the property $(W,D)\in\EXT(B)$ iff $\struc A_{(W,D)}\models\theta$, and
	\begin{enumerate}
		\item $\struc A_{(W,D)}$ is neither $\ell$-connected nor contains a clique of size $\ell$ for any $\ell\ge3$.\label{num:no-cliques}
		\item There exists a family of default theories $(\emptyset,D)_k$ such that the tree width of $\struc A_{(\emptyset,D)_k}$ is not constant.
	\end{enumerate}
\end{corollary}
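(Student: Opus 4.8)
This corollary has three components, and the plan is to dispatch the first quickly and to spend the effort on the two graph-theoretic claims about the Gaifman graph of $\struc A_{(\emptyset,D)}$. The existence of $\theta$ is immediate: a theory with $W=\emptyset$ whose defaults use literals only is a special $B$-default theory, so the formula $\theta_{\textit{extension}}$ from Lemma~\ref{lem:mso-ext} already witnesses $(W,D)\in\EXT(B)$ iff $\struc A_{(W,D)}\models\theta_{\textit{extension}}$. It remains to understand the shape of $\struc A_{(\emptyset,D)}$.

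For item~1 I would first make the Gaifman graph explicit. Because every default consists of literals, each default element $d$ is joined, through $\prem$, $\just$ and $\concl$, to at most three formula elements; distinct default elements never co-occur in a relation and hence form an independent set; and the remaining elements span only the tiny formula trees of literals, their negations $\lnot\beta$, and the constants forced by $\false\in[B]$, connected by $\conn$-edges that form a forest. Triangle-freeness then follows from a case distinction over the edge types: a triangle would require three pairwise adjacent vertices, but two defaults are never adjacent, the $\conn$-forest is acyclic, and a default together with two of its literal-neighbours closes a triangle only if those two neighbours are joined by a $\conn$-edge, i.e.\ are a literal and its own negation inside one default. Ruling this configuration out (or observing that the encoding of literals prevents it) gives triangle-freeness, hence no clique of size $\ell\geq 3$. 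For the connectivity bound I would exhibit a vertex of degree at most two --- typically a negated-justification node $\lnot\beta$, which attaches only to its single $\conn$-child --- and invoke that vertex connectivity never exceeds the minimum degree, so the graph is not $\ell$-connected for any $\ell\geq 3$.

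For item~2 I would produce an explicit family realising unbounded tree width through pseudo-cliques, which item~1 shows is the only route left. For each $k\geq 3$ take pairwise distinct propositions $p_1,\dots,p_k$ and, for every pair $i<j$, one default $d_{ij}=\frac{p_i:\beta_{ij}}{p_j}$ with a private fresh justification $\beta_{ij}$; let $D_k$ be the set of these defaults. In $\struc A_{(\emptyset,D_k)}$ the element $d_{ij}$ is adjacent precisely to $p_i$ and $p_j$ (apart from its private justification gadget), the $p_i$ are pairwise non-adjacent, and distinct defaults are non-adjacent. Hence the subgraph induced on $\{p_1,\dots,p_k\}\cup\{d_{ij}\mid i<j\}$ is exactly a pseudo-clique of size $k$ and exact cardinality~$1$, with the $p_i$ as main-nodes and each $d_{ij}$ the single edge-node of the path $p_i\,d_{ij}\,p_j$. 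By Theorem~\ref{thm:tw-pseudo-cliques-cardinality} this pseudo-clique has tree width $k-1$, and since tree width does not increase when passing to a subgraph, $\tw(\struc A_{(\emptyset,D_k)})\geq k-1$, which is not constant.

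The step I expect to be most delicate is the verification in item~2 that the induced subgraph is a pseudo-clique in the exact sense of Definition~\ref{def:pseudo-clique}: one must check that the justification gadgets $\beta_{ij},\lnot\beta_{ij}$ and the constant nodes attach strictly outside the set of main- and edge-nodes, so that they create neither a shortcut between two $p_i$ nor an extra chord along any path, leaving the induced structure \emph{exactly} the required simple paths with ``no other edges''. The secondary subtlety is the triangle-free case analysis of item~1, whose only dangerous configuration is a default carrying a literal and its complement; keeping prerequisite, justification and conclusion on disjoint propositions in the item~2 construction is exactly what sidesteps it, so that the family built for item~2 is consistent with the absence of cliques asserted in item~1.
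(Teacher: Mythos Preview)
Your proposal is correct and follows the same overall strategy as the paper: invoke Lemma~\ref{lem:mso-ext} for the MSO formula, exploit the bipartite default/formula structure for item~1, and exhibit an explicit family containing large pseudo-cliques for item~2.

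There are two differences in detail worth noting. For the connectivity part of item~1, the paper simply picks a default node $x_d$ and observes that deleting its (at most three) neighbours $x_\alpha,x_\beta,x_\gamma$ already isolates $x_d$, so the graph cannot be $\ell$-connected for $\ell\ge 3$; your minimum-degree argument via a $\lnot\beta$ leaf reaches the same conclusion. For item~2, the paper chooses the family $D=\{\,d_{ij}=\frac{x_i:y_j}{\false}\mid 1\le i\le j\le n\,\}$ and appeals directly to Theorem~\ref{thm:tw-pseudo-cliques-cardinality}, whereas you use $d_{ij}=\frac{p_i:\beta_{ij}}{p_j}$ with fresh justifications. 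Your variant makes the pseudo-clique of exact cardinality~$1$ on main-nodes $p_1,\dots,p_k$ completely transparent (each $d_{ij}$ is the single edge-node on the path $p_i\,d_{ij}\,p_j$) and does not rely on the hypothesis $\false\in[B]$; the paper's version is terser but leaves the identification of main- and edge-nodes to the reader. Either construction suffices, and your extra care about the ``literal and its complement'' corner case in item~1 is more scrupulous than the paper's one-line claim that there are no edges between literals.
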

\begin{proof}
For (1.) we construct the MSO formula $\theta$ according to Lemma~\ref{lem:mso-ext}. At first observe that the universe of $\struc A_{(W,D)}$ comprises only literals and defaults. Further, there are no edges between literals, and no edges between defaults. Every default can be connected to at most three different literals. Obviously the graph does not contain a clique of size $\ell\ge3$. Furthermore, the graph is not $\ell$-connected for any $\ell\ge3$ by the following observation. Let $x_d$ be some individual representing the default $d=\frac{\alpha:\beta}{\gamma}$. Then there are individuals $x_\alpha,x_\beta,x_\gamma$ to represent the respective parts of $d$ which are all connected to $x_d$. If now $x_\alpha,x_\beta$ and $x_\gamma$ are removed from the graph, then there is no other individual to which $x_d$ is connected yielding a contradiction to the connectivity. 

Turning to (2.) observe that (1.) prohibits using $\ell$-cliques or $\ell$-connectivity for any $\ell\ge3$ to measure the tree width of $\struc A_{(W,D)_k}$. Now define a default theory $(\emptyset,D)$ complying with Theorem~\ref{thm:ael-not-in-XP}, where 
$
 D \eqdef \left.\left\{d_{ij}=\frac{x_i:y_j}{\false} \;\right|\; 1\leq i\leq j\leq n\right\},
$
and $x_i,y_j$ are variables for $1\leq i\leq j\leq n$. Application of Theorem~\ref{thm:tw-pseudo-cliques-cardinality} concludes the proof.
\qed
\end{proof}


\noindent An analogous result holds for autoepistemic logic.

\begin{corollary}\label{cor:tw-nontriv-AEL}Let $B$ be a finite set of Boolean functions.
There exists a family of autoepistemic $B$-formulae $\Sigma_k$ and all $\varphi\in\Sigma_k$ are disjunctions of propositions or $L$-prefixed propositions such that there exists an MSO formula $\theta$ fulfilling the property $\Sigma_k\in\EXP(B)$ iff $\struc A_{\Sigma_k}\models\theta$ and the tree width of $\struc A_{\Sigma_k}$ is not constant.
\end{corollary}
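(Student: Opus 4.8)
The plan is to prove Corollary~\ref{cor:tw-nontriv-AEL} in close parallel to Corollary~\ref{cor:tw-nontriv-DL}~(2.), by exhibiting an explicit family $\Sigma_k$ of the required syntactic form whose associated structure $\struc A_{\Sigma_k}$ embeds a pseudo-clique of growing size, and then invoking Theorem~\ref{thm:tw-pseudo-cliques-cardinality}. The existence of the MSO formula $\theta$ with $\Sigma_k\in\EXP(B)$ iff $\struc A_{\Sigma_k}\models\theta$ is immediate from Lemma~\ref{lem:mso-exp}, so the real content is the tree-width (non-constancy) claim.

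First I would fix the family. Mirroring the default-logic construction, I take for each $k$ a set of $\binom{k}{2}$ (or $k^2$) clauses over propositions $x_1,\dots,x_k$, each clause being a disjunction of (possibly $L$-prefixed) propositions, for instance $\Sigma_k\eqdef\{\,x_i\lor x_j\mid 1\le i<j\le k\,\}$, so that every $\varphi\in\Sigma_k$ is indeed a disjunction of propositions and hence lies in the allowed fragment. If the hypothesis only guarantees $\lor\in[B\cup\{\false,\true\}]$ rather than $\lor\in B$ itself, I would instead realize each binary disjunction by a fixed constant-size $B$-term whose formula tree has bounded depth; this changes each edge-path's length by only an additive constant and therefore does not affect the argument, since Theorem~\ref{thm:tw-pseudo-cliques-cardinality} holds for every cardinality $k\ge 0$.

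Next I would read off the structure. In $\struc A_{\Sigma_k}$ the universe consists of the subformulae of the clauses; the propositions $x_1,\dots,x_k$ are shared across all clauses (a proposition is a single individual), while each clause contributes its own internal subformula nodes connected via the $\conn_{f,i}$ relations. The key point is that the two propositions occurring in clause $x_i\lor x_j$ are joined, through that clause's subformula tree, by a simple path of bounded length whose interior nodes are private to that clause and to no other. Thus, taking $V_{\textit{main}}=\{x_1,\dots,x_k\}$ as the main-nodes and, for each pair $u\ne v$, the interior subformula nodes of the corresponding clause as the edge-nodes $V_{u,v}$, I would verify that $\struc A_{\Sigma_k}$ (viewed as its underlying Gaifman graph) contains a pseudo-clique of size $k$ in the sense of Definition~\ref{def:pseudo-clique}: every pair of main-nodes is linked by a simple path, the edge-node sets are pairwise disjoint, and no spurious edges connect distinct clauses. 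By Theorem~\ref{thm:tw-pseudo-cliques-cardinality} this subgraph has tree width $k-1$, and since tree width is monotone under subgraphs, $\tw(\struc A_{\Sigma_k})\ge k-1$, which grows without bound, establishing non-constancy.

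The main obstacle I anticipate is the careful checking that the edge-node sets are genuinely disjoint and that ``no other edges are present'' between them, as Definition~\ref{def:pseudo-clique} demands: because propositions are shared individuals, one must confirm that the only shared universe elements between two clauses are the main-nodes themselves, and that no subformula of one clause coincides with a subformula of another (which could create chords and, more seriously, violate the simple-path condition). This hinges on the precise convention for the universe of $\struc A_{\Sigma_k}$, namely whether syntactically equal subformulae of distinct clauses are identified or kept distinct; I would make this explicit and, if necessary, choose the propositions and the encoding of $\lor$ so that distinct clauses share exactly their two argument-propositions and nothing else, which is exactly the pseudo-clique pattern. Once that disjointness is nailed down, the invocation of Theorem~\ref{thm:tw-pseudo-cliques-cardinality} is routine.
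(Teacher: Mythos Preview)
Your proposal is correct and takes essentially the same approach as the paper: both define $\Sigma_k=\{x_i\lor x_j\mid i\ne j\}$, appeal to Lemma~\ref{lem:mso-exp} for the MSO formula, and argue that the propositions $x_1,\dots,x_k$ together with the clause nodes form a (pseudo-)clique forcing tree width at least $k-1$. The only difference is cosmetic: the paper's one-line proof asserts that $\struc A_{\Sigma_k}$ ``consist[s] of cliques of size $k$, in fact'', whereas you more accurately observe that in the Gaifman graph the disjunction nodes sit \emph{between} the proposition nodes, yielding a pseudo-clique of cardinality~$1$, and then invoke Theorem~\ref{thm:tw-pseudo-cliques-cardinality}; your extra care about subformula identification and the availability of $\lor$ in $B$ addresses points the paper leaves implicit.
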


\begin{proof}
 Define $\Sigma_k$ as
	 $\Sigma_k \eqdef \{x_i\lor x_j\mid 1\leq i \leq j\leq k\}.$ 
	Then the structure $\struc A_{\Sigma_k}$ consist of cliques of size $k$, in fact.\qed 
\end{proof}

\begin{corollary}\label{cor:tw-nontriv-IMP}Let $B$ be a finite set of Boolean functions such that $\land,\lor\in[B]$. Let $\mathbf{\Gamma}_1$ be the set of sets $\Gamma$ of formulae in monotone $2$-CNF and let $\mathbf{\Gamma}_2$ be the set of sets $\Gamma$ of formulae in DNF.
There exists a family of sets of $B$-formulae $(F,G)_k$ with $F\in\mathbf{\Gamma}_1, G\in\mathbf{\Gamma}_2$ such that there exists an MSO formula $\theta$ fulfilling the property $(F,G)_k\in\IMP(B)$ iff $\struc A_{(F,G)_k}\models\theta$ and the tree width of $\struc A_{(F,G)_k}$ is not constant.
\end{corollary}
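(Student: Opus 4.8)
The plan is to follow the template of Corollaries~\ref{cor:tw-nontriv-DL} and~\ref{cor:tw-nontriv-AEL}: the MSO sentence is supplied for free by the general encoding, and all the work lies in exhibiting a concrete family of admissible instances whose Gaifman graphs embed pseudo-cliques of unbounded size, so that Theorem~\ref{thm:tw-pseudo-cliques-cardinality} forces the tree width up. For the MSO part I would simply take $\theta \eqdef \theta_{\textit{imp}}$ from Lemma~\ref{lem:mso-imp}: that lemma already guarantees $F \models G$ iff $\struc{A}_{F,G} \models \theta_{\textit{imp}}$, i.e.\ $(F,G)_k \in \IMP(B)$ iff $\struc{A}_{(F,G)_k} \models \theta$, for every admissible pair, and nothing here is specific to the family.

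Next I would place all the combinatorial weight in the premise and define, for $k \ge 3$,
\[
  F_k \eqdef \{\, x_i \lor x_j \mid 1 \le i < j \le k \,\}, \qquad G_k \eqdef \{x_1\},
\]
so that each element of $F_k$ is a single positive $2$-clause, whence $F_k \in \mathbf{\Gamma}_1$, while $G_k$ is a single-literal (hence trivially DNF) formula, so $G_k \in \mathbf{\Gamma}_2$. Whether $F_k \models G_k$ actually holds is irrelevant: the corollary asks only for the MSO characterization (already in hand) together with non-constant tree width, and $G_k$ contributes no new subformula, hence no new edge, to $\struc{A}_{(F_k,G_k)}$.

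In the clean case $\lor \in B$ the argument is then immediate and parallels Corollary~\ref{cor:tw-nontriv-AEL}: the universe of $\struc{A}_{(F_k,G_k)}$ consists of the variable nodes $x_1,\dots,x_k$ together with one connective node per clause, and via $\conn_{\lor,1},\conn_{\lor,2}$ the node for $x_i \lor x_j$ is adjacent to exactly $x_i$ and $x_j$. Taking $V_{\textit{main}} = \{x_1,\dots,x_k\}$ and $V_{x_i,x_j} = \{x_i \lor x_j\}$ exhibits a pseudo-clique of size $k$ and exact cardinality $1$, so Theorem~\ref{thm:tw-pseudo-cliques-cardinality} yields $\tw(\struc{A}_{(F_k,G_k)}) = k-1$, which is not constant.

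The main obstacle is the general case $\lor \in [B]\setminus B$, where each clause is encoded by a \emph{fixed} $B$-formula $t(x_i,x_j)$ of constant size. Two difficulties appear: the link between the leaves $x_i$ and $x_j$ is now the tree-path through $t$, giving cardinality bounded by the (constant) depth of $t$ rather than $1$; and, more delicately, subformulae of $t$ depending on only one of the two variables (e.g.\ $\neg x_i$ in a De Morgan encoding) are identified across \emph{all} clauses containing that variable, so the naive leaf-to-leaf paths are not internally vertex-disjoint. I would resolve this by classifying each subformula node according to its variable content ($\emptyset$, $\{x_i\}$, $\{x_j\}$, or $\{x_i,x_j\}$) and observing that only the $\{x_i,x_j\}$-nodes are specific to the pair $\{i,j\}$. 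One can then take as main-nodes the maximal single-variable stems: every node strictly between the two stems on a clause path has content $\{x_i,x_j\}$, hence belongs to no other clause, so distinct pairs contribute internally disjoint constant-length edge-node paths, i.e.\ a genuine pseudo-clique of size $k$ and bounded cardinality, to which Theorem~\ref{thm:tw-pseudo-cliques-cardinality} again applies (using that tree width is monotone under taking subgraphs). Equivalently, and perhaps more robustly, one may bypass the case analysis altogether by contracting each branch set of $\{x_i\}$-content subformulae to a point and routing the bridge $b_{ij}$ through a minimal $\{x_i,x_j\}$-node, which exhibits $K_k$ as a minor of the Gaifman graph and yields $\tw \ge k-1$ by minor-monotonicity of tree width. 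Either route concludes that the tree width of $\struc{A}_{(F_k,G_k)}$ grows with $k$ and is therefore not constant.
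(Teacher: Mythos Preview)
The paper gives no explicit proof of this corollary; it is meant to be read off the analogous Corollary~\ref{cor:tw-nontriv-AEL}, whose one-line argument is precisely your ``clean case'': take the family $\{x_i\lor x_j\}$ and observe that the resulting structure is a (pseudo-)clique on the $k$ variable nodes. Your choice $F_k=\{x_i\lor x_j\mid 1\le i<j\le k\}$, $G_k=\{x_1\}$, together with $\theta=\theta_{\textit{imp}}$ from Lemma~\ref{lem:mso-imp}, is exactly this argument transplanted to the implication setting, so your core proof coincides with what the paper intends.

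Your treatment of the case $\lor\in[B]\setminus B$ goes beyond anything the paper spells out: in Corollary~\ref{cor:tw-nontriv-AEL} the paper simply writes $x_i\lor x_j$ without discussing how $\lor$ is realised as a $B$-formula, and the same tacit assumption would carry over here. Your observation that a fixed $B$-encoding $t(x,y)$ of $\lor$ may cause single-variable subformulae to be identified across clauses, and your two remedies---routing the pseudo-clique edge through the pair-specific two-variable subformulae, or contracting the single-variable branch sets to exhibit a $K_k$-minor and invoking minor-monotonicity of tree width---are genuine additions that make the statement hold under the hypothesis $\land,\lor\in[B]$ as written. This extra care is correct and supplies more detail than the paper itself provides.
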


\section{Conclusion} \label{sect:conclusion}

In this paper we applied Courcelle's Theorem~\cite{courcelle90} and the logspace version of Elberfeld\,\etal\cite{ebjata10} to the most prominent decision problems in the nonmontonic default logic and autoepistemic logic. Thereby we showed that the extension existence problem for a given default theory $(W,D)$ is solvable in time $O(f(k)\cdot|(W,D)|)$ and space $O(\log|(W,D)|)$ if the tree width of the corresponding MSO structure is bounded by $k$; similarly for the expansion existence problem for a set of autoepistemic formulae, and as well for the implication problem for sets of formulae $F,G$.

We mention that furthermore one can achieve similar results for the credulous (resp. brave) and skeptical (resp. cautious) reasoning problems of the nonmontone logics from above by slight extensions of the constructed MSO-formulae.

Furthermore we introduced with \textit{pseudo-cliques} a weaker notion of cliques: basically we have a clique where each edge is divided into two edges by a fresh node  (or even a longer path). There we showed that the tree width of a graph is bounded from below by the size of its largest sub-pseudo-clique. If we investigate default theories $(W,D)$ which contain an empty knowledge base $W$ and only defaults which are composed of propositions or the constant $\false$ only, then for constant parameterizations we show collapses of $\P$ and $\NP$ (resp. $\LOGSPACE$ and $\NL$) if the corresponding parameterized problem is in $\XPnu$ (resp. $\XLnu$).
Thus through the concept of pseudo-cliques we construct a family of default theories whose tree width of its MSO-structures is unbounded. Therefore this kind of parameterization cannot be used to prove such complexity class collapses. Analogue claims can be made for the expansion existence problem in autoepistemic logic and the implication problem for sets of formulae.

For further research it would be very interesting to find a parameterization that is non-trivial in the sense of Theorem~\ref{thm:dl-not-in-XP} but uses many different values. Also insights on new types of parameterizations, in particular in the context of the new space parameterized complexity classes, would be very engaging.

\paragraph{Acknowledgement.}

For helpful hints and discussions we are grateful to Nadia Creignou (Marseille) 
and Thomas Schneider (Bremen).

\bibliographystyle{splncs}
\bibliography{nml-fpt}

\begin{thebibliography}{10}
\providecommand{\url}[1]{\texttt{#1}}
\providecommand{\urlprefix}{URL }

\bibitem{implicationIPL09}
Beyersdorff, O., Meier, A., Thomas, M., Vollmer, H.: {The Complexity of
  Propositional Implication}. Information Processing Letters  109(18),
  1071--1077 (2009)

\bibitem{beyersdorffMTV09}
Beyersdorff, O., Meier, A., Thomas, M., Vollmer, H.: The complexity of
  reasoning for fragments of default logic. J. Log. Comput.  (2011),
  {DOI}:10.1093/logcom/exq061

\bibitem{courcelle90}
Courcelle, B.: Graph rewriting: An algebraic and logic approach. In: Handb. of
  Theor. Comp. Sc., vol. Form. Mod. and Sem., pp. 193--242. Elsevier Sc. Ltd
  (1990)

\bibitem{creignou-meier-thomas-vollmer:10}
Creignou, N., Meier, A., Thomas, M., Vollmer, H.: The complexity of reasoning
  for fragments of autoepistemic logic. ACM Transaction on Computational Logic
  13(2) (2010), \url{http://tocl.acm.org/accepted/457meier.pdf}

\bibitem{ebjata10}
Elberfeld, M., Jakoby, A., Tantau, T.: Logspace versions of the theorems of
  {B}odlaender and {C}ourcelle. In: Proc.\ 51th FOCS. IEEE Computer Society
  (2010)

\bibitem{flgr06}
Flum, J., Grohe, M.: Parameterized Complexity Theory. Springer Verlag (2006)

\bibitem{gottlob92}
Gottlob, G.: Complexity results for nonmonotonic logics. J. Log. Comput.  2(3),
   397--425 (1992)

\bibitem{gopiwe10}
Gottlob, G., Pichler, R., Wei, F.: Bounded treewidth as a key to tractability
  of knowledge representation and reasoning. Artif. Intell.  174(1),  105--132
  (2010)

\bibitem{moore:85}
Moore, R.C.: Semantical considerations on modal logic. Art. Int.  25,  75--94
  (1985)

\bibitem{niemela:90}
Niemel{\"a}, I.: Towards automatic autoepistemic reasoning. In: Proc.\ 2th
  JELIA. LNCS, vol. 478, pp. 428--443. Springer (1990)

\bibitem{reiter:80}
Reiter, R.: A logic for default reasoning. Artificial Intelligence  13,
  81--132 (1980)

\bibitem{Stockhusen11}
Stockhusen, C.: Anwendungen monadischer {L}ogik zweiter {S}tufe auf {P}robleme
  beschr{\"a}nkter {B}aumweite und deren {P}latzkomplexit{\"a}t. Diplomarbeit
  (2011)

\bibitem{thvo10}
Thomas, M., Vollmer, H.: Complexity of non-monotonic logic. Bulletin of the
  EATCS  102,  53--82 (2010)

\end{thebibliography}

\ifreport
\newpage
\appendix

\section{Lower Bounds for the Implication Problem}
\label{app:imp}

Analogues of Theorems~\ref{thm:dl-not-in-XP} and \ref{thm:dl-not-in-XL} for autoepistemic logic were given in Section~\ref{sect:AL}. Here, we want to point out that also for the implication problems, similar results hold.


\begin{theorem} \label{thm:imp-not-in-XL}
  Let $B$ be a finite set of Boolean functions such that $x\xor y\xor z \in [B]$ and
  let $\mathbf{\Gamma}$ be the set of sets $\Gamma$ of formulae such that each formula  
  $\varphi \in \Gamma$ is composed of functions $f(x,y,z)=x\xor y\xor z$ only.
  Further let $\kappa$ be a parameterization function
  for which 
  there exists a $c \in \N$ 
  such that $\kappa\big(F,G\big) < c$ for all $F,G \in \mathbf{\Gamma}$.
  If $\LOGSPACE\neq\ParityL$, then the implication problem for sets of $B$-formulae, parameterized by $\kappa$, is not contained in $\XLnu$.
\end{theorem}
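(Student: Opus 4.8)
The plan is to follow the template of Theorems~\ref{thm:dl-not-in-XL} and \ref{thm:ael-not-in-XL}: isolate the subproblem $\IMP'$ of the implication problem in which every premise and conclusion is built solely from the ternary connective $f(x,y,z)=x\xor y\xor z$, establish that $\IMP'$ is $\ParityL$-hard under logspace reductions, and then argue that a parameterization bounded on $\mathbf{\Gamma}$ cannot place a $\ParityL$-hard problem into $\XLnu$ unless $\LOGSPACE=\ParityL$.

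First I would invoke the reduction of Lemma~4.8 of \cite{implicationIPL09}. A formula using only $x\xor y\xor z$ denotes a linear (affine) function modulo~$2$, so the question $F\models G$ for such formulae is exactly the question whether a target linear equation over $\mathrm{GF}(2)$ is entailed by a system of linear equations over $\mathrm{GF}(2)$. Deciding entailment of linear systems modulo~$2$ is the canonical $\ParityL$-complete task, and the cited $f$ is a $\leqlogm$-reduction of a $\ParityL$-complete problem into $\IMP'$; hence $\IMP'$ is $\ParityL$-hard.

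The heart of the argument is then the slice-combination step. Assume for contradiction that $(\IMP(B),\kappa)\in\XLnu$, so that every slice $(\IMP(B),\kappa)_\ell$ lies in $\LOGSPACE$. The syntactic condition defining $\mathbf{\Gamma}$ (``every formula uses only $x\xor y\xor z$'') is decidable in logspace, so intersecting each slice with the $\mathbf{\Gamma}$-instances keeps it in $\LOGSPACE$, giving $(\IMP',\kappa)_\ell\in\LOGSPACE$ for every $\ell$. Since $\kappa(F,G)<c$ for all $F,G\in\mathbf{\Gamma}$, only the finitely many slices $\ell=0,\dots,c-1$ are nonempty on $\IMP'$. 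Writing $A_\ell$ for the logspace decider of $(\IMP',\kappa)_\ell$, one observes that a restricted instance $x$ is accepted by $A_\ell$ exactly when $x\in\IMP'$ \emph{and} $\kappa(x)=\ell$; hence $x\in\IMP'$ iff some $A_\ell$ with $\ell<c$ accepts, and no $\kappa$-value need be computed. Running $A_0,\dots,A_{c-1}$ and taking their disjunction is therefore a logspace algorithm (a constant disjunction of logspace tests) deciding $\IMP'$, so $\IMP'\in\LOGSPACE$.

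Composing the logspace reduction $f$ with this decider gives $\ParityL\subseteq\LOGSPACE$, and with the trivial inclusion $\LOGSPACE\subseteq\ParityL$ this forces $\LOGSPACE=\ParityL$, contradicting the hypothesis. I expect the only content-bearing step to be the $\ParityL$-hardness of $\IMP'$, i.e.\ the linear-algebra reading of ternary-XOR implication supplied by Lemma~4.8; the slice-combination is routine and identical in spirit to the earlier lower-bound proofs, the one delicate point being the observation that deciders for foreign slices automatically reject, which is what lets the disjunction be correct without evaluating $\kappa$.
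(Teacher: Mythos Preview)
Your approach is essentially identical to the paper's: establish $\ParityL$-hardness of the restricted implication problem $\IMP'$ via a citation to \cite{implicationIPL09}, then run the slice-combination argument in the style of Theorems~\ref{thm:dl-not-in-XP} and~\ref{thm:dl-not-in-XL} to derive $\LOGSPACE=\ParityL$ from $\XLnu$ membership. The paper's own proof is terser (it simply says ``following the argumentation of Theorem~\ref{thm:dl-not-in-XP}''), whereas you spell out the disjunction-of-finitely-many-slices argument explicitly and correctly, including the point that foreign-slice deciders reject automatically so $\kappa$ need not be evaluated.

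One small discrepancy: the paper invokes Lemma~4.4 of \cite{implicationIPL09} for the $\ParityL$-hardness of $\IMP'$, not Lemma~4.8 (the latter is used elsewhere for the autoepistemic lower bound). Your informal justification via linear systems over $\mathrm{GF}(2)$ is the right intuition regardless of the numbering.
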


\begin{proof}
 From Lemma 4.4 in \cite{implicationIPL09} it follows that the implication problem $\IMP'$ for these sets of formulae is $\ParityL$-hard. Suppose $\LOGSPACE\neq\ParityL$ and let $\kappa$ be a parameterization such that $\kappa(F,G)<c$ for every $F,G\in\mathbf{\Gamma}$. Now following the argumentation of Theorem~\ref{thm:dl-not-in-XP} yields a contradiction to $\LOGSPACE\neq\ParityL$.
\end{proof}

\begin{theorem} \label{thm:imp-not-in-XP}
  Let $B$ be a finite set of Boolean functions such that $\land,\lor \in [B]$. 
  Let $\mathbf{\Gamma}_1$ be the set of sets $\Gamma$ of formulae in monotone $2$-CNF and let $\mathbf{\Gamma}_2$ be the set of sets $\Gamma$ of formulae in DNF.
  Further let $\kappa$ be a parameterization function
  for which 
  there exists a $c \in \N$ 
  such that $\kappa\big(F,G\big) < c$ for all $F \in \mathbf{\Gamma}_1,G\in\mathbf\Gamma_2$.
  If $\P\neq\NP$, then the implication problem for sets of $B$-formulae, parameterized by $\kappa$, is not contained in $\XPnu$.
\end{theorem}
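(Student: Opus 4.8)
The plan is to reuse verbatim the two-step scheme of Theorem~\ref{thm:dl-not-in-XP}. I first isolate the restricted implication problem $\IMP'$, namely deciding $F\models G$ with $F\in\mathbf{\Gamma}_1$ a set of monotone $2$-CNF formulae and $G\in\mathbf{\Gamma}_2$ a set of DNF formulae, and show that $\IMP'$ is $\coNP$-hard; I then use the fact that $\kappa$ is bounded on all such instances to collapse $\XPnu$-membership of $(\IMP',\kappa)$ into a single polynomial-time algorithm, contradicting $\P\neq\NP$.

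For the hardness step I would reduce from the unsatisfiability problem for $3$-CNF formulae, the canonical $\coNP$-complete problem (the complement of $\threeSAT$). Given $\varphi=\bigwedge_j(\ell_{j1}\lor\ell_{j2}\lor\ell_{j3})$, put $F\eqdef\emptyset$ and let $G\eqdef\{\psi\}$ be the singleton containing the $3$-DNF $\psi=\bigvee_j(\lnot\ell_{j1}\land\lnot\ell_{j2}\land\lnot\ell_{j3})$, which is logically equivalent to $\lnot\varphi$; thus $F\in\mathbf{\Gamma}_1$ trivially and $G\in\mathbf{\Gamma}_2$. Since $F=\emptyset$ has every assignment as a model, $F\models G$ holds iff $\psi$ is a tautology, i.e.\ iff $\varphi$ is unsatisfiable, which establishes the desired $\coNP$-hardness; the same bound is also available from the reductions in \cite{implicationIPL09,gopiwe10}. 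The point that makes the two restricted forms compatible is that, as $G$ ranges over all DNFs, the negation $\lnot G$ ranges over all CNFs, so the non-implication instance $F\land\lnot G$ already encodes arbitrary CNF-satisfiability while $F$ itself remains a (polynomial-time solvable) monotone $2$-CNF.

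For the transfer step, assume $\P\neq\NP$ and, for contradiction, $(\IMP',\kappa)\in\XPnu$; then each slice $(\IMP',\kappa)_\ell$ is in $\P$. Because $\kappa(F,G)<c$ for every instance in the restricted class, only the slices with $\ell<c$ are non-empty, so on a given input one may run the finitely many polynomial-time slice deciders and accept iff at least one accepts, yielding $\IMP'\in\P$ and hence $\coNP\subseteq\P$, so $\P=\NP$. This transfer is entirely routine and copies the argument of Theorem~\ref{thm:dl-not-in-XP} with the $\NP$-complete source replaced by a $\coNP$-complete one, so I expect the only real work to be the hardness reduction---and there the single point needing care is realizing both syntactic restrictions simultaneously. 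That obstacle is mild: monotonicity of $F$ is met by taking $F$ empty (or, if a non-empty premise is desired, by appending clauses $z\lor z'$ over fresh variables, which leaves validity unchanged), while the DNF side absorbs all of the hardness.
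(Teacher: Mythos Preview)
Your overall strategy---establish $\coNP$-hardness of the restricted problem $\IMP'$ and then combine the finitely many slice deciders exactly as in Theorem~\ref{thm:dl-not-in-XP}---is precisely what the paper does; the paper's proof simply cites Lemma~4.2 of \cite{implicationIPL09} for the hardness and then refers back to the earlier argument.

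There is one genuine wrinkle in your explicit reduction. The hypothesis is only $\land,\lor\in[B]$, so the theorem must in particular go through for the minimal such $B$, where $[B]$ consists of monotone functions and negation is \emph{not} available. Your construction sets $G=\{\psi\}$ with $\psi=\bigvee_j(\lnot\ell_{j1}\land\lnot\ell_{j2}\land\lnot\ell_{j3})$, which uses negated literals; for that minimal $B$, $\psi$ is not a $B$-formula, so $(F,G)$ is not an instance of the $B$-implication problem and your hardness argument does not land inside $\IMP'$. What is actually needed is the classical monotone reduction: introduce fresh variables $p_i,n_i$ for each original variable $x_i$, set $F=\{p_i\lor n_i\mid i\}$ (monotone $2$-CNF), and let $G$ be the monotone DNF $\bigvee_i(p_i\land n_i)\ \lor\ \bigvee_j\bigwedge_{k}v(\overline{\ell_{jk}})$, where $v$ maps $x_i\mapsto p_i$ and $\lnot x_i\mapsto n_i$; then $F\models G$ iff the original $3$-CNF is unsatisfiable. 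This is what Lemma~4.2 of \cite{implicationIPL09} supplies, and you already cite it, so your proof is ultimately correct---just be aware that the explicit reduction you wrote down does not by itself cover the case the theorem actually asserts.
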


\begin{proof}
 From Lemma 4.2 in \cite{implicationIPL09} it follows that the implication problem $\IMP'$ for these sets of formulae is $\coNP$-hard. Following an analogue argumentation as in the proof of Theorem~\ref{thm:imp-not-in-XL} implies this theorem.\qed
\end{proof}

\section{Tree Width of Pseudo-Cliques}
\label{app:pseudo}


\begin{theorem}\label{thm:pseudo-clique-tw}
 Let $G_n=(V,E)$ be a pseudo-clique of size $n\geq 3$ and cardinality $k\geq 0$. 
 Then the tree width of $G_n$ is $n-1$.
\end{theorem}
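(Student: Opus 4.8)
The plan is to prove matching bounds $\tw(G_n)\ge n-1$ and $\tw(G_n)\le n-1$, both anchored to the classical fact that the complete graph on $n$ vertices, $K_n$, has tree width $n-1$. Since a pseudo-clique of cardinality $0$ is literally $K_n$, the content of the theorem is that subdividing each clique edge into a path leaves the tree width unchanged, and both directions will be reduced to $K_n$.

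For the lower bound I would show that $K_n$ is a minor of $G_n$ and invoke minor-monotonicity of tree width. Writing $V_{\textit{main}}=\{u_1,\dots,u_n\}$, I define for each $i$ a branch set $B_i$ consisting of $u_i$ together with all internal edge-nodes of the paths $V_{u_i,u_j}$ for every $j>i$. Each $B_i$ is connected, since $u_i$ is adjacent to the first node of every such path and each path is itself connected; the $B_i$ are pairwise disjoint, because each path $V_{u_i,u_j}$ is assigned in full to the branch set of its smaller-indexed endpoint; and for $i<j$ the last internal node of $V_{u_i,u_j}$ (or $u_i$ directly, when that path is empty so that $(u_i,u_j)\in E$) exhibits an edge between $B_i$ and $B_j$. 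Hence $K_n$ is a minor of $G_n$ and $\tw(G_n)\ge\tw(K_n)=n-1$.

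For the upper bound I would build an explicit tree decomposition of width $n-1$, as suggested by the rightmost picture in Figure~\ref{fig:pseudo-cliques}. Take a central bag $B_0=V_{\textit{main}}$ of size $n$. For each pair $i<j$ whose connecting path is $u_i-v_1-\cdots-v_m-u_j$, attach to $B_0$ a chain of bags $\{u_i,v_1,u_j\},\{v_1,v_2,u_j\},\dots,\{v_{m-1},v_m,u_j\},\{v_m,u_j\}$, carrying the endpoint $u_j$ through the entire chain while each $v_\ell$ occurs in exactly two consecutive bags. Every bag then has at most $3\le n$ vertices, so the width is $n-1$; each path edge lies in the chain bag holding both of its endpoints, and each direct edge $(u_i,u_j)$ lies in $B_0$.

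The step that needs care, and the main obstacle, is verifying the connectivity condition (iii) once all the chains are glued to $B_0$. Edge-nodes are harmless, appearing in two consecutive bags of a single chain. For a main-node $u_j$ I would check that the bags containing it form a connected subtree: $u_j$ lies in $B_0$, in every bag of each chain for which it is the carried endpoint (a connected path dangling from $B_0$), and in the first bag only of each chain for which it is the non-carried endpoint (a bag adjacent to $B_0$); since each of these bags either equals or is adjacent to $B_0$, their union is connected. This yields a valid tree decomposition of width $n-1$, which together with the lower bound gives $\tw(G_n)=n-1$.
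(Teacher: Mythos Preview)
Your proof is correct. The upper-bound construction is essentially the paper's: a central bag $V_{\textit{main}}$ with, for each pair $(i,j)$, a pendant chain of size-$3$ bags carrying $u_j$ through and introducing the edge-nodes one at a time; this is exactly the ``hierarchical chain of children'' the paper attaches in its transformation, only you build the decomposition directly instead of obtaining it by massaging an arbitrary one.

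The lower bound is where you diverge. The paper argues by transformation: given any tree decomposition of $G_n$, it replaces every occurrence of an edge-node $d^r_{i,j}$ by the main-node $j$ (moving the edge-nodes off into fresh size-$3$ bags), checks that this does not increase the width, and observes that the resulting decomposition, restricted to main-nodes, is a valid tree decomposition of $K_n$; hence the original width was at least $n-1$. You instead exhibit $K_n$ as a minor of $G_n$ via explicit branch sets and invoke minor-monotonicity of tree width. Both routes reduce to $\tw(K_n)=n-1$; yours is shorter and leans on a standard structural fact, while the paper's is more self-contained and, as a by-product, yields a canonical form for tree decompositions of pseudo-cliques (edge-nodes confined to small bags), which is not needed here but is what the authors chose to prove.
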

\begin{proof}
 Let $G_n=(V,E)$ be the following undirected graph, where
\begin{itemize}
	\item $V$ contains $n$ \emph{main-nodes}, labeled $i$ for $1 \leq i \leq n$ and a number of \emph{edge-nodes}
	  labeled $d^r_{i,j}$ for $1 \leq i < j \leq n$ and $1 \leq r \leq k_{i,j}$, $k_{i,j} \leq k$, and
	\item $E = \{\{i,d^1_{i,j}\}, \{d^1_{i,j},d^2_{i,j}\}, \dots , \{d^{k_{i,j}-1}_{i,j},d^{k_{i,j}}_{i,j}\}, \{d^{k_{i,j}}_{i,j},j\}\;|\; 1 \leq i < j \leq n\}$.
\end{itemize}

\begin{claim}
Let $n \geq 3$. Any tree decomposition $(T,X)$ of $G_n = (V,E)$ can be transformed into a tree decomposition $(T',X')$ of $G_n$ such that
\begin{enumerate}
	\item $\width(T',X') \leq \width(T,X)$,
	\item in $(T',X')$ every edge-node $d^r_{i,j}$ appears exactly once, in particular in a bag of size $\leq 3$.
\end{enumerate}
\end{claim}
\begin{claimproof}
We execute consecutively for every pair of main-nodes $(i,j)$ the following procedure on the (valid) tree decomposition $(T,X)$.
\begin{quote}
	Consider $Y = \{B \in X\; |\; d^r_{i,j} \in B\}$, \ie, all bags containing edge-nodes between $i$ and $j$.
	Observe that all bags in $Y$ are connected in $T$.
	Pick up one bag $C \in Y$ such that $i \in C$. Such a $C$ does exist, since $(T,X)$ is a valid tree decomposition
	and $\{i,d^1_{i,j}\}\in E$. Now replace in
	every bag of $Y$ every occurrence of any edge-node $d^r_{i,j}$ by $j$. Add to $C$ the hierarchical chain of children
	$\{i,d^1_{i,j},j\}, \{d^1_{i,j},d^2_{i,j},j\}, \dots,  \{d^{k_{i,j}-1}_{i,j}, d^{k_{i,j}}_{i,j},j\}, \{d^{k_{i,j}}_{i,j},j\}$.
\end{quote}
One verifies that after each \emph{round} of this procedure the so obtained structure $(T',X')$ is a valid tree decomposition of $G_n$ satisfying
condition 1. Finally, it is obvious that after the whole procedure $(T',X')$ satisfies even condition 2.
\end{claimproof}
Further, the following claim shows that the tree width of $G_n$ depends on $n$.

\begin{claim}
The graph $G_n$ has tree width $n-1$.
\end{claim}
\begin{claimproof}
It suffices to show that the transformed tree decomposition $(T',X')$ of the preceding claim has width $n-1$.
We observe that all bags in $X'$ of size greater than $3$ are bags containing solely main-nodes.
Thus we have $\width(T',X') \leq n-1$. 

For a lower bound observe that for each pair $(i,j)$ there is at least one bag $B \in X'$ such that $i,j \in B$. Therefore, $(T',X')$ restricted to bags containing only main-nodes (\ie, any $d^r_{i,j}$ is removed from its bag) is a valid tree decomposition of the edge-complete graph with node set $\{1,\dots,n\}$. Thus, $\width(T',X') \geq n-1$.
\end{claimproof}
This concludes the proof.\qed
\end{proof}
\fi
\end{document}